\title{\texorpdfstring{FPT Approximations for Fair $k$-Min-Sum-Radii}{FPT Approximations for Fair k-Min-Sum-Radii}} 
\author{Lena Carta}{University of Bonn, Germany}{}{}{}
 \author{Lukas Drexler}{Heinrich Heine University D\"usseldorf, Germany}{lukas.drexler@hhu.de}{https://orcid.org/0000-0001-9395-6711}{}
 \author{Annika Hennes\footnote{Corresponding author}}{Heinrich Heine University D\"usseldorf, Germany}{annika.hennes@hhu.de}{https://orcid.org/0000-0001-9109-3107}{}
 \author{Clemens R\"osner}{Fraunhofer-Institut für Algorithmen und Wissenschaftliches Rechnen SCAI, Germany}{clemens.roesner@gmx.de}{}{}
 \author{Melanie Schmidt}{Heinrich Heine University D\"usseldorf, Germany}{mschmidt@hhu.de}{https://orcid.org/0000-0003-4856-3905}{}
\authorrunning{L. Carta, L. Drexler, A. Hennes, C. R\"osner, and M. Schmidt} 
\keywords{Clustering, \texorpdfstring{$k$}{k}-min-sum-radii, fairness} 
\DeclareMathOperator{\poly}{poly}
\DeclareMathOperator{\MSR}{MSR}
\DeclareMathOperator{\OPT}{OPT}
\DeclareMathOperator{\eps}{\varepsilon}
\DeclareMathOperator{\CC}{CC}
\DeclareMathOperator{\V}{V}
\newcommand{\rCC}[2]{\hat{r}^{#1}_{#2}} 
\newcommand{\cCC}[2]{\hat{c}^{#1}_{#2}} 
\DeclarePairedDelimiter\ceil{\lceil}{\rceil}
\begin{document}

\maketitle

\begin{abstract}
    We consider the $k$-min-sum-radii ($k$-MSR) clustering problem with fairness constraints. The $k$-min-sum-radii problem is a mixture of the classical $k$-center and $k$-median problems. We are given a set of points $P$ in a metric space and a number $k$ and aim to partition the points into $k$ clusters, each of the clusters having one designated center. The objective to minimize is the sum of the radii of the $k$ clusters (where in $k$-center we would only consider the maximum radius and in $k$-median we would consider the sum of the individual points' costs).

    Various notions of fair clustering have been introduced lately, and we follow the definitions due to Chierichetti et al.~\cite{CKLV17} which demand that cluster compositions shall follow the proportions of the input point set with respect to some given sensitive attribute. For the easier case where the sensitive attribute only has two possible values and each is equally frequent in the input, the aim is to compute a clustering where all clusters have a 1:1 ratio with respect to this attribute. We call this the 1:1 case.

    There has been a surge of FPT-approximation algorithms for the $k$-MSR problem lately, solving the problem both in the unconstrained case and in several constrained problem variants. We add to this research area by designing an FPT $(6+\epsilon)$-approximation that works for $k$-MSR under the mentioned general fairness notion.  For the special 1:1 case, we improve our algorithm to achieve a $(3+\epsilon)$-approximation.
\end{abstract}
\clearpage
\setcounter{page}{1}

\section{Introduction}

Cluster analysis is an unsupervised learning task that has inspired much research during the last decades. Nearly all popular clustering formulations lead to NP-hard and often APX-hard problems, thus there is a thriving field designing approximation algorithms for clustering. A very popular and well studied problem is the \emph{$k$-median} problem: Given $n$ points $P$ from a metric space and a number $k$, find $k$ centers $C\subseteq P$ such that the sum of the distances of all points to their respective centers is minimized. Notice that $k$ centers implicitly define $k$ clusters by assigning each point to its closest center (breaking ties arbitrarily). 
The $k$-median problem is APX-hard~\cite{GK99,JMS02}, but allows for $O(1)$-approximations. After a long line of research, the currently best approximation for $k$-median achieves a guarantee of~$2.675+\varepsilon$~\cite{BPRST15}\footnote{This result actually holds for a slightly more general variant where the set of input points $P$ can be different than the set of possible center locations from which we pick $C$.}. A clustering function that is very popular for its simplicity and elegant algorithms is \emph{$k$-center}. It has the same input and solution space, but judges clusterings based on the maximum radius of the (induced) clusters. The goal is to minimize the radius of the cluster with largest radius. It has been known for quite some time that $k$-center admits a $2$-approximation~\cite{G85,HS85}, and that this is tight when assuming P$\neq$ NP~\cite{HN79}.

In this paper, we consider \emph{$k$-min-sum-radii} clustering, abbreviated as $k$-MSR. We get the same input and solution space as for $k$-center, but the objective changes to the \emph{sum} of the cluster radii. The problem thus lies in between $k$-center and $k$-median as it is a sum-based objective, but considers radii instead of points. 
Contrary to intuition, in metric spaces, many design techniques fail for $k$-msr which work fine for $k$-center and $k$-median. However, the problem allows for a polynomial $(3+\epsilon)$-approximation~\cite{buchem2024} via primal dual algorithms. 

The model of fair clustering was introduced by Chierichetti et. al.~\cite{CKLV17} based on a disparate impact approach. In the easiest case, given an input point set with the same number of blue and red points, the goal of fair clustering is to find a clustering where every cluster is composed of the same number of red and blue points (the colors represent values of a sensitive attribute). The number of points in a cluster is unlimited, but the composition of the clusters is constrained. For $k$-center \emph{and} $k$-median, the following simple approach suffices to design polynomial-time approximation algorithms for fair clustering in this scenario: 

Compute a fair micro clustering, i.e., a clustering into $\mu \gg k$ clusters which is fair (this is easier than finding exactly $k$ clusters, for two colors and $\mu=n/2$ it is just a matching). Consider the clusters in this micro clustering as inseparable entities and use an algorithm for the unconstrained problem to cluster them into $k$ clusters. The resulting clusters are fair because the union of fair clusters is again fair (this property of the constraint \emph{fairness} is sometimes called \emph{mergeability}). In addition, both for $k$-center and for $k$-median, the cost of the resulting clusters can be reasonably bounded, yielding $O(1)$-approximations for the respective fair clustering problems. 
Most surprisingly, the same approach does \emph{not} work in the case of $k$-min-sum-radii. The reason is that for $k$-min-sum-radii, the cost of a micro clustering can actually be \emph{larger} than the clustering with $k$ clusters. This property is shared neither by $k$-center nor by $k$-median: For $k$-center, the radius of the micro clustering is always bounded by the $k$-clustering, and for $k$-median, the sum of the points' costs in the micro clustering is bounded by the respective sum of the $k$-clustering.

\begin{figure}
\begin{tikzpicture}
\centering
\begin{scope}
    \node (a) [circle,draw,fill,inner sep=0cm,minimum size=0.1cm] at (0,0)  {};
    \node (b) [circle,draw,fill,inner sep=0cm,minimum size=0.1cm] at (0.5,0)  {};
    \node (c) [circle,draw,fill,inner sep=0cm,minimum size=0.1cm] at (1,0)  {};
    \node (d) [circle,draw,fill,inner sep=0cm,minimum size=0.1cm] at (1.5,0)  {};
    \draw [thick] (a) -- (b) -- (c) to (d);
    \draw [thick, bend left] (a) to (c);
    \draw [thick, bend left] (b) to (d);
    \draw [thick, bend left] (a) to (d);
    \node (e) [circle,draw,fill,inner sep=0cm,minimum size=0.1cm] at (0,-0.5)  {};
    \node (f) [circle,draw,fill,inner sep=0cm,minimum size=0.1cm] at (0.5,-0.5)  {};
    \node (g) [circle,draw,fill,inner sep=0cm,minimum size=0.1cm] at (1,-0.5)  {};
    \node (h) [circle,draw,fill,inner sep=0cm,minimum size=0.1cm] at (1.5,-0.5)  {};
    \draw [thick] (e) -- (f) -- (g) to (h);
    \draw [thick, bend right] (e) to (g);
    \draw [thick, bend right] (f) to (h);
    \draw [thick, bend right] (e) to (h);
    \draw [thick] (a) -- (e);
    \draw [thick] (b) -- (f);
    \draw [thick] (c) -- (g);
    \draw [thick] (d) -- (h);
    \fill [opacity=0.2] (0.75,-0.25) ellipse (1.2 and 0.6);
    \node at (0.75,-1) {cost $1$};
\end{scope}
\node at (2.25,-0.25) {$\Rightarrow$};
\begin{scope}[xshift=2.75cm]
    \node (a) [circle,draw,fill,inner sep=0cm,minimum size=0.1cm] at (0,0)  {};
    \node (b) [circle,draw,fill,inner sep=0cm,minimum size=0.1cm] at (0.5,0)  {};
    \node (c) [circle,draw,fill,inner sep=0cm,minimum size=0.1cm] at (1,0)  {};
    \node (d) [circle,draw,fill,inner sep=0cm,minimum size=0.1cm] at (1.5,0)  {};
    \draw [thick] (a) -- (b) -- (c) to (d);
    \draw [thick, bend left] (a) to (c);
    \draw [thick, bend left] (b) to (d);
    \draw [thick, bend left] (a) to (d);
    \node (e) [circle,draw,fill,inner sep=0cm,minimum size=0.1cm] at (0,-0.5)  {};
    \node (f) [circle,draw,fill,inner sep=0cm,minimum size=0.1cm] at (0.5,-0.5)  {};
    \node (g) [circle,draw,fill,inner sep=0cm,minimum size=0.1cm] at (1,-0.5)  {};
    \node (h) [circle,draw,fill,inner sep=0cm,minimum size=0.1cm] at (1.5,-0.5)  {};
    \draw [thick] (e) -- (f) -- (g) to (h);
    \draw [thick, bend right] (e) to (g);
    \draw [thick, bend right] (f) to (h);
    \draw [thick, bend right] (e) to (h);
    \draw [thick] (a) -- (e);
    \draw [thick] (b) -- (f);
    \draw [thick] (c) -- (g);
    \draw [thick] (d) -- (h);
    \fill [opacity=0.2] (0,-0.25) ellipse (0.2 and 0.6);
    \fill [opacity=0.2] (0.5,-0.25) ellipse (0.2 and 0.6);
    \fill [opacity=0.2] (1,-0.25) ellipse (0.2 and 0.6);
    \fill [opacity=0.2] (1.5,-0.25) ellipse (0.2 and 0.6);
    \node at (0.75,-1) {cost $\mu$};
\end{scope}
\begin{scope}[xshift=5.25cm]
    \node (a) [circle,draw,fill,inner sep=0cm,minimum size=0.1cm] at (0.5,0)  {};
    \node (e) [circle,draw,fill,inner sep=0cm,minimum size=0.1cm] at (0,0)  {};
    \node (f) [circle,draw,fill,inner sep=0cm,minimum size=0.1cm] at (0.25,-0.5)  {};
    \node (g) [circle,draw,fill,inner sep=0cm,minimum size=0.1cm] at (0.75,-0.5)  {};
    \node (h) [circle,draw,fill,inner sep=0cm,minimum size=0.1cm] at (1,0)  {};
    \draw [thick] (a) -- (e);
    \draw [thick] (a) -- (f);
    \draw [thick] (a) -- (g);
    \draw [thick] (a) -- (h);
    \fill [opacity=0.2] (0.5,-0.25) ellipse (0.8 and 0.6);
    \node at (0.5,-1) {cost $1$};
\end{scope}
\node at (6.75,-0.25) {$\Rightarrow$};
\begin{scope}[xshift=7.25cm]
    \node (a) [circle,draw,fill,inner sep=0cm,minimum size=0.1cm] at (0.5,0)  {};
    \node (e) [circle,draw,fill,inner sep=0cm,minimum size=0.1cm] at (0,0)  {};
    \node (f) [circle,draw,fill,inner sep=0cm,minimum size=0.1cm] at (0.25,-0.5)  {};
    \node (g) [circle,draw,fill,inner sep=0cm,minimum size=0.1cm] at (0.75,-0.5)  {};
    \node (h) [circle,draw,fill,inner sep=0cm,minimum size=0.1cm] at (1,0)  {};    \draw [thick] (a) -- (e);
    \draw [thick] (a) -- (f);
    \draw [thick] (a) -- (g);
    \draw [thick] (a) -- (h);
    \fill [opacity=0.2] (0.25,0) ellipse (0.5 and 0.2);
    \fill [opacity=0.2] (f) ellipse (0.15 and 0.15);
    \fill [opacity=0.2] (g) ellipse (0.15 and 0.15);
    \fill [opacity=0.2] (h) ellipse (0.15 and 0.15);
    \node at (0.5,-1) {cost $1$};
\end{scope}
\begin{scope}[xshift=9.5cm]
    \node (a) [circle,draw,fill,inner sep=0cm,minimum size=0.1cm] at (0,0)  {};
    \node (b) [circle,draw,fill,inner sep=0cm,minimum size=0.1cm] at (0.5,0)  {};
    \node (c) [circle,draw,fill,inner sep=0cm,minimum size=0.1cm] at (0.5,-0.5)  {};
    \node (d) [circle,draw,fill,inner sep=0cm,minimum size=0.1cm] at (0,-0.5)  {};
    \draw [thick] (a) -- (b) -- (c) -- (d) -- (a);
    \draw [thick] (a) -- (c);
    \fill [opacity=0.2] (0.25,-0.25) ellipse (0.55 and 0.55);
    \node at (0.25,-1) {cost $1$};
\end{scope}
\node at (10.75,-0.25) {$\Rightarrow$};
\begin{scope}[xshift=11.5cm]
    \node (a) [circle,draw,fill,inner sep=0cm,minimum size=0.1cm] at (0,0)  {};
    \node (b) [circle,draw,fill,inner sep=0cm,minimum size=0.1cm] at (0.5,0)  {};
    \node (c) [circle,draw,fill,inner sep=0cm,minimum size=0.1cm] at (0.5,-0.5)  {};
    \node (d) [circle,draw,fill,inner sep=0cm,minimum size=0.1cm] at (0,-0.5)  {};
    \draw [thick] (a) -- (b) -- (c) -- (d) -- (a);
    \draw [thick] (a) -- (c);
    \fill [opacity=0.2] (a) ellipse (0.15 and 0.15);
    \fill [opacity=0.2] (b) ellipse (0.15 and 0.15);
    \fill [opacity=0.2] (c) ellipse (0.15 and 0.15);
    \fill [opacity=0.2] (d) ellipse (0.15 and 0.15);
    \node at (0.25,-1) {cost $0$};
\end{scope}
\node at (2.25,-1.5) {(a) cost increases to $\mu$};
\node at (6.75,-1.5) {(b) cost stays the same};
\node at (10.75,-1.5) {(c) cost drops to $0$};

\end{tikzpicture}
\caption{Anything can happen for $k$-MSR: The cost of a micro clustering with $\mu=4$ compared to the macro clustering with $k=1$. All pictures use shortest path metrics, the edges have unit weights.\label{anything-goes}}
\end{figure}
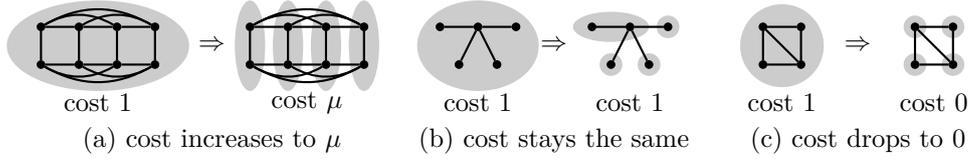
Figure~\ref{anything-goes} illustrates the situation for $k$-MSR. The figure is for unconstrained $k$-MSR to ease visualization, and just illustrates how the cost of micro clusterings for $k$-MSR can behave. The examples are depicted with $k=1$ and $\mu=4$. In (a), we see a cluster with $k$-MSR cost $1$ in which all points have the same pairwise distance, namely $1$. If this cluster is broken into $\mu$ pieces, then these pieces suddenly contribute $t$ to the objective. In (b), we have a star with $\mu$ leaves where one cost $1$ cluster remains when we use $\mu$ clusters, so the cost stays the same. 
In (c), the cluster only contains $\mu$ points, and then the cost drops to zero when it is divided into $\mu$ subclusters. Overall we observe that we have no control over the cost of a micro clustering and that the micro clustering approach fails. 

\subparagraph*{FPT Approximation}

\begin{table}[ht]
\centering
\caption{A list of FPT approximation results for $k$-MSR and $k$-median, all of which appeared in the last five years. Multiple results for the same problem are listed in reverse chronological order. }
\label{tab:fptapprox}
\begin{threeparttable}
\begin{tabular}{@{}lcc@{\hskip.18cm}ccc@{}}
\toprule
                 & Unconstr. & Capacities / Uniform Cap. & Matroid Con.& Fair Centers  \\ 
 \midrule
 $k$-MSR & $2\!+\!\epsilon$~\cite{chen2024parameterized} & $\approx 7.6$~\cite{JKY24}, $15\!+\!\epsilon$~\cite{BL023a}  /  $3$~\cite{JKY24}, $4\!+\!\epsilon$~\cite{BL023a}, $28$~\cite{IV20} & $9\!+\!\epsilon$~\cite{IV20} & $3\!+\!\epsilon$~\cite{chen2024parameterized}  \\
$k$-med & $1.546$~\cite{AL24}\tnote{$\ast$} & $3\!+\!\epsilon$~\cite{CAL19}, $7\!+\!\epsilon$~\cite{ABMM019} / no improvement & $2\!+\!\epsilon$~\cite{CG0LL19} & -
\end{tabular}
\begin{tablenotes}
\item[$\ast$] This result holds for the problem as described in the introduction. If centers can be chosen from a set possibly different from $P$ then the best known FPT approximation bound is $\approx 1.735\!+\!\epsilon$~\cite{CG0LL19}.
\end{tablenotes}
\end{threeparttable}
\end{table}

There is mainly one approach for designing polynomial-time approximation algorithms for $k$-MSR, which is a primal-dual approach that yielded the currently best known bounds for unconstrained $k$-MSR and $k$-MSR with lower bounds on the cluster sizes, or outliers~\cite{buchem2024}. This lack of diversity in the techniques to obtain approximation algorithms for $k$-MSR has lately led to a surge of FPT approximation algorithms for $k$-MSR, that was also inspired by a similar strong interest in FPT approximation algorithms for the $k$-median problem. FPT approximation algorithms have been obtained for various problem variants, see Table~\ref{tab:fptapprox}. The ``fair centers'' variant demands that the set of centers is fair rather than the clusters. For more details, see the paragraph about related work.
While the obtained approximation algorithms only work for small $k$, they are still of high interest due to the problem insights that they provide and also due to the fact that clustering with a small number of clusters is an important domain.

\paragraph*{Our Results}
We get the following results in FPT-time where the parameter is the number of clusters $k$.

\begin{itemize}
  \item A $(3+\epsilon)$-approximation for the fair clustering $k$-MSR problem when there are only two colors, both have exactly a ratio of 1:1 in the input point set, and we also want to achieve that exact ratio. We are not aware of any results on fair $k$-MSR in general metrics. 
  \item A $(6+\epsilon)$-approximation that works for a variety of more general fairness constraints, including all notions defined in~\cite{bercea2019cost,CKLV17,rosner2018privacy}. To the best of our knowledge, there are no previous results for this problem.
\end{itemize}
We also extend our approach for clustering with uniform lower bounds, and generally to the class of \emph{mergeable} constraints (see \Cref{def:mergeable-constraint}). Uniform lower bounds have been studied in the clustering literature to model anonymity~\cite{APFTKKZ10}: The constraint demands that every cluster contains a minimum number of $\ell$ points, i.e., that $|a(c)|\ge \ell$ for all $c \in C$. 
A polynomial-time $(3.5+\epsilon)$-approximation algorithm for $k$-MSR with lower bounds  is known~\cite{buchem2024}, and our FPT approximation algorithm achieves a $(3+\epsilon)$-guarantee.

\subparagraph*{Main Technical Contribution}
Our main technical contribution is the development of a completely novel approach to design FPT approximation algorithms for $k$-MSR. We give more details on this approach in the next paragraph, but the main gain from our approach is that we make it possible to use $k$-center algorithms as a subroutine via a clever branching  that we have not seen like this in the literature before. We believe this technique to be of independent interest for the design of FPT approximation algorithms for $k$-MSR.

Following this novel design scheme, it is possible to obtain $O(1)$-approximations for fair $k$-MSR and mergeable constraints in general. An additional technical contribution lies in reducing the factors to small constants. In particular in the general fairness case, obtaining the factor $6+\epsilon$ requires a clever bounding technique.

\subparagraph*{More Insight into the Scheme}
We first discuss another approach that does \emph{not} work for $k$-MSR. There is a fairly general idea to obtain FPT approximations for constrained $k$-clustering problems which we can think of in the following way: Compute a solution to the unconstrained clustering problem with an approximation algorithm (here, one can even use a bicriteria approximation which computes $\mathcal{O}(k)$ centers). This gives a set of centers $S$. Then \lq move\rq\ all points to their closest center, i.e., create an instance $I$ where all points lie at the $k$ centers in $S$. The optimum cost for the constrained problem on $I$ can be related to the optimum cost for the constrained problem on the original instance. Then solve $I$ with an algorithm that uses that the points all lie on $k$ locations (notice that the constraint will prevent us from simply opening one center at each location). The resulting solution is then translated back to the original instance. 

Adamczyk et al.~\cite{ABMM019} use this approach to develop an FPT $(7+\varepsilon)$-approximation for capacitated $k$-median. 
But here is another problem of the $k$-MSR cost function: Moving all points to centers of an approximate solution also has uncontrollable effects on the cost function. There seems to be no easy fix to this, and Inamdar and  Varadarajan in the introduction of~\cite{IV20} also notice that the approach does not seem to extend to $k$-MSR, so there is a need for new techniques to design FPT algorithms for $k$-MSR in general.

The starting idea of our approach is to use an algorithm for the unconstrained \emph{$k$-center} problem at the core of the algorithm (rather than for an unconstrained $k$-MSR problem, which would follow the above scheme). This means that we start off with a small approximation ratio of $2$. 
Notice that there is a connection between the value $F^*$ of an optimal $k$-center solution and the largest radius $r_{\text{max}}$ in an optimal $k$-min-sum-radii solution:  $r_{\text{max}}$ must lie in the interval $[F^*, kF^*]$. This  relation is obviously not tight enough to directly lead to an algorithm, but it can be used to find a near optimal  approximation $\hat{r}_{\max}$ for the largest radius (a proof can be found in the full version).
Now our first idea is that we can find the largest cluster in an $k$-MSR solution by running a $k$-center algorithm with $\hat{r}_{\max}$ and then guessing in which of the $k$-center clusters the largest $k$-MSR cluster lies. But this only works for the first cluster. To recurse, we have to eliminate this cluster from the input such that a $k$-center algorithm can find the next cluster. The main hurdle here is that we cannot simply delete the cluster because we might destroy the optimal fair assignment (which we do not know and cannot guess at this point). We resolve this problem by keeping all points but adjusting the metric to a (non-metric) distance function. Then we show that we can still solve the resulting problem by approximately solving a so called \emph{$k$-center completion problem}.

This problem might be of independent interest and could play a role in further $k$-MSR research: Basically, we hand the problem an incomplete set of centers $C'=\{c_1,\ldots,c_{\ell}\}$ with $\ell \le k$ together with radii $r_1,\ldots,r_{\ell}$ and ask it to find a $k$-center solution where points can be assigned to a center $c_i$ from $C'$ while paying $r_i$ less than the actual distance. We observe that the $k$-center completion problem can be solved by Gonzalez' $2$-approximation technique for $k$-center~\cite{G85}.

Using this modeling we are able to recursively find largest clusters. There is another technical problem, though. In every step, we are guessing the cluster in the completed solution returned from the $k$-center completion problem in which the optimal center of the next $k$-MSR cluster lies. But this may be one of the first $\ell$ clusters. Now the final crucial idea is that in this case, we opt to increase the radius of that previous cluster instead of opening a new cluster. It makes sense that for the $k$-MSR objective, this is a good idea: Often we can reduce cost by using less clusters instead of using overlapping clusters.

Combining these three ideas ((i) guessing clusters based on a $k$-center solution, (ii) modeling the elimination of clusters by using a completion problem, (iii) allowing clusters to grow), we obtain our main algorithm. 

Applying our scheme yields a covering of the input points where we know that the balls are reasonably small compared to the balls in an optimal solution, and every optimum ball is covered by one of our balls. However, we also need to compute the actual clustering. Doing so requires to resolve the cluster membership of points that are in multiple balls, while respecting fairness constraints. For the general fairness case, we do this by modelling the overlap of balls by a graph and computing connected components in this graph. We can then show that the radius of the components is not too large (see Section~\ref{subsec:assighment}).

\subparagraph*{Related Work}\label{sec:related-work}
Research on FPT algorithms for constrained and unconstrained $k$-MSR is highly active at the moment, see Table~\ref{tab:fptapprox}. Notice that the paper  by Chen et al.~\cite{chen2024parameterized} gives an algorithm for a problem called \emph{fair sum of radii}, but it refers to setting different from ours: While points also have colors, the fairness constraints are not imposed on the clusters but rather on the centers. More precisely, each color $i$ has its own associated value $k_i$ and any feasible solution has to use exactly $k_i$ centers from that color. We call this \emph{fair centers} in the table. All the results in the table are in general metric space, which is our setting. 

In the more restricted Euclidean case, Drexler et al.~\cite{DHLSW23} gave a PTAS for the fair $k$-MSR problem for constant $k$, however, the PTAS was faulty and a corrected version only exists for $k$-MSR with outliers, not for fair $k$-MSR. It is thus open to give a better result for Euclidean fair $k$-MSR. Bandyapadhyay et al.~\cite{bandyapadhyay2023fpt} give a $(2+\eps)$-approximation for the capacitated $k$-MSR problem whose runtime linearly depends on the dimension, and a $(1+\eps)$-approximation which depends exponentially on the dimension.

There are some results on poly-time approximation algorithms for $k$-MSR in general metrics, all following the primal-dual scheme. For the unconstrained case, the first was due to Charikar and Panigrahy~\cite{CP01}, and there are two recent improvements by Friggstad and Jamshidian~\cite{FJ22} and Buchem et al~\cite{buchem2024} (see below). 
The $k$-MSR problem with uniform lower bounds has been studied by Ahmadian and Swamy~\cite{AS16} who give a polynomial-time 3.83-approximation, and additionally give a $12.365$-approximation if outliers are additionally considered.
In~\cite{buchem2024}, Buchem et al.\ improve upon all these factors by proposing a $(3+\eps)$-approximation for the unconstrained case and the version with outliers, and a $(3.5+\eps)$-approximation for lower bounds \emph{and} lower bounds with outliers. 
Their algorithm also works for the non-uniform lower bounded case (for this case, we could achieve a $(6+\epsilon$)-approximation since lower bounds are mergeable, but that is worse than $(3.5+\epsilon)$).

\newcommand{\centerc}{\texttt{Center}}
\newcommand{\msr}{\texttt{MSR}}

\section{Getting Started} \label{sec:prelims}
\subparagraph*{\texorpdfstring{Defining the $k$-Center and $k$-MSR Problem}{Defining the k-Center and k-MSR Problem}}
    An instance $I$ for a $k$-clustering problem consists of a finite set $P$ of $n$ points, a (metric) distance function $d:P\times P \to \mathbb{R}_{\geq 0}$ and a number $k\in \mathbb{N}$ with $k \leq n$. 
    A feasible solution $S = (\mathscr{C},\sigma)$ consists of a set $\mathscr{C} = \{c_1, \ldots, c_k\}\subseteq P$ of \emph{centers} and an assignment $\sigma: P \to \mathscr{C}$ of points to centers.
    For a center $c_i\in \mathscr{C}$, we call $C_i = \sigma^{-1}(c_i)$ the \emph{cluster} of $c_i$ induced by $\sigma$. Furthermore, we let $r_i = \max_{p\in C_i} d(p, c_i)$ denote the \emph{radius} of $C_i$.
    Let $r_1, r_2, \ldots, r_k$ be the radii induced by a solution $S$. We refer to the tuple $(r_1, \ldots, r_k)$ as the \emph{radius profile} of $S$.
    The cost of $S = (\mathscr{C},\sigma)$ with radius profile $(r_1, \ldots, r_k)$ with respect to the $k$-center objective is defined as 
    $
    \max_{i\in\{1,\ldots, k\}} r_i,
    $    
    while the cost with respect to the $k$-min-sum-radii objective is defined as 
$
    \msr(S) = \sum_{i=1}^{k} r_i.
  $ 
    The $k$-center/$k$-min-sum-radii problem takes as input a point set $P$, a metric $d \colon P \times P \to \mathbb{R}_{\ge 0}$ and a number $k \in \mathbb{N}$, and the task is to minimize the $k$-center/$k$-min-sum-radii objective.

\subparagraph*{Exact Fairness}

In the fairness setting, every point belongs to exactly one protected group. Here, we will usually denote these groups by \textit{colors} $\gamma_1,\ldots,\gamma_m$. A coloring of the points is given by a function $\gamma\colon P \to \{\gamma_1,\ldots,\gamma_m\}$.

The notion of \textit{exact fairness} as for example defined in \cite{rosner2018privacy} is based on maintaining the underlying proportions of colors in the clusters. That is, for every color $\gamma_j$, the proportion of points in $P$ with color $\gamma_j$ is the same as the proportion of points with color $\gamma_j$ in any cluster. To be more precise, we call a solution $S = (\mathscr{C}, \sigma)$ with induced clusters $C_1,\ldots,C_k$ \textit{fair} if
\[ \frac{\lvert C_i \cap \Gamma_j \rvert}{\lvert C_i \rvert} = \frac{\lvert\Gamma_j\rvert}{\lvert P \rvert} \]
for all $i\leq k$ and $j\leq m$, where $\Gamma_j = \gamma^{-1}(\gamma_j)$ is the set of points with color $\gamma_j$.
The special case of $m=2$ and $|\Gamma_1|=|\Gamma_2|$ has a specifically nice structure because the optimum solution can be partitioned into $|P|/2$ bicolored pairs. We refer to this as the 1:1 case.

There also exist more relaxed definitions of fairness that do not demand strict preservation of input ratios. In the full version, we discuss notions from \cite{bera2019fair,bercea2019cost, bohm2020fair,CKLV17,rosner2018privacy}.

\subparagraph*{Mergeable Constraints}
Let $(\mathscr{C}, \sigma)$ be a clustering. We \emph{merge} two clusters $C_i\coloneqq\sigma^{-1}(c_i)$, $C_j\coloneqq\sigma^{-1}(c_j)$ by replacing $c_i, c_j \in \mathscr{C}$ by an arbitrary point $c'\in C_i \cup C_j$ (i.e., $\mathscr{C} \coloneqq \mathscr{C} \setminus \{c_i, c_j\} \cup \{c'\}$) and assigning $\sigma(p) = c'$ for all $p\in C_i \cup C_j$.
Notice that because $\sigma$ maps every point to exactly one center, all clusters $C_i \coloneqq \sigma^{-1}(c_i),\ i=1,\ldots,k$ need to be pairwise disjoint.

\begin{definition}\label{def:mergeable-constraint}
    We say a constraint is \emph{mergeable} if a feasible clustering is still feasible with respect to the constraint after merging two of its clusters.
\end{definition}
In this context, we say an assignment is \textit{feasible} if it preserves the constraint.
When dealing with the $k$-min-sum-radii problem with a specific mergeable constraint, we refer to the $k$-center problem with the same constraint as the \emph{corresponding} $k$-center problem. For example, for $k$-min-sum-radii with exact fairness, the corresponding $k$-center problem is $k$-center with exact fairness.
In the full version, we list several (fairness) constraints that are mergeable and prove this.

\subsection{\texorpdfstring{The $k$-center completion problem}{The k-Center Completion Problem}}

\newcommand{\kcenterproblem}{$k$-center completion problem\xspace}
\newcommand{\kminsumradiiproblem}{$k$-min-sum-radii completion problem\xspace}
\newcommand{\kcentersolution}{$k$-center completion solution\xspace}
\newcommand{\kminsumradiisolution}{$k$-min-sum-radii completion solution\xspace}

\newcommand{\centercompletion}{\texttt{Center-Completion}}

The following problem can be solved in a relatively straightforward way using Gonzalez' algorithm~\cite{G85}. 

\begin{definition}
    The \emph{\kcenterproblem}  takes as input a point set $P$, a metric $d \colon P \times P \to \mathbb{R}_{\ge 0}$, a number $k \in \mathbb{N}$, a set of predefined centers $c_1, \ldots, c_{\ell}$ for an $\ell \in [k]$, and corresponding radii $r_1,\ldots,r_{\ell}$. The aim is to  compute centers $c_{\ell+1},\ldots,c_k$ and an assignment $\alpha\colon P \to \{c_1,\ldots,c_k\}$ such that 
    $
    \max_{x \in P} d'(x,\alpha(x)) 
    $
    is minimized where
    \[
    d'(x,y) = 
    \begin{cases}
    \max\{d(x,y)-r_i,0\} & \text{ if } x\in P\setminus \{c_1,\ldots,c_{\ell}\},\ y=c_i \text{ with } i \in \{1,\ldots, \ell\}\\
    \max\{d(x,y)-r_i-r_j,0\} & \text{ if } x=c_i,\ y=c_j \text{ with } i,j\in \{1,\ldots, \ell\}\\
    d(x,y) & \text { else}
    \end{cases}
    \]
    i.e., the radius of clusters $1$ to $\ell$ is reduced by $r_1,\ldots,r_{\ell}$ when computing the objective function.
\end{definition}

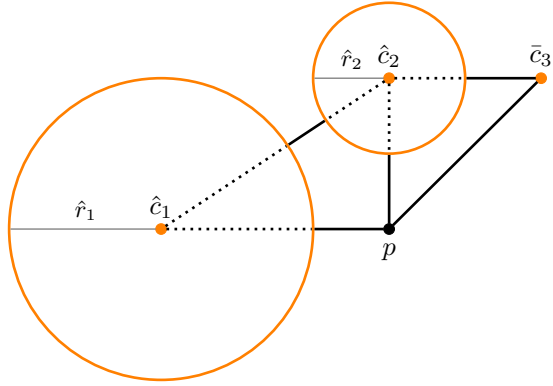
\begin{figure}
    \centering
    \begin{tikzpicture}[scale=1, every node/.style={scale=1}]
        \def \linecolor  {black}
        \def \pointwidth  {2pt}
        \def \linewidth  {1}
        \def \clustercolor {orange}
        
        \def \radiusone {2cm}
        \def \radiustwo {1cm}
        
        \coordinate (1) at (-3,0);
        \coordinate (2) at (0,2);
        \coordinate (p1) at (0,0);
        \coordinate (3) at (2,2);
        
        \draw[fill=black, draw=black] (p1) circle (\pointwidth) {};
        \node at ($(p1)+(0,-.3)$) {$p$};
    
        \foreach \p in {1,2,3}
        {
            \draw[fill=orange, draw=orange] (\p) circle (\pointwidth) {};
        }
        \node at ($(1)+(0,.3)$) {$\hat{c}_1$};
        \node at ($(2)+(0,.3)$) {$\hat{c}_2$};
        \node at ($(3)+(0,.3)$) {$\bar{c}_3$};
        
        \draw[draw=\clustercolor, line width=\linewidth] (1) circle (\radiusone) {};
        \draw[draw=gray, line width=.5\linewidth, shorten <= \pointwidth] (1) -- node[above] {\small $\hat{r}_1$} ($(1)!\radiusone-.5\linewidth!(-10,0)$);
        \draw[draw=\clustercolor, line width=\linewidth] (2) circle (\radiustwo) {};
        \draw[draw=gray, line width=.5\linewidth, shorten <= \pointwidth] (2) -- node[above] {\small $\hat{r}_2$} ($(2)!\radiustwo-.5\linewidth!(-10,2)$);
        
        \draw[-, \linecolor, line width=\linewidth, shorten >=(\radiusone + .5\linewidth), shorten <=\pointwidth] (p1) -- (1);
        \draw[dotted, \linecolor, line width=\linewidth, shorten <=\pointwidth, shorten >=.5\linewidth] (1) -- ($(1)!\radiusone!(p1)$);
        
        \draw[-, \linecolor, line width=\linewidth, shorten >=(\radiustwo + .5\linewidth), shorten <=\pointwidth] (p1) -- (2);
        \draw[dotted, \linecolor, line width=\linewidth, shorten <=\pointwidth, shorten >=.5\linewidth] (2) -- ($(2)!\radiustwo!(p1)$);
        
        \draw[-, \linecolor, line width=\linewidth, shorten <=\pointwidth, shorten >=\pointwidth] (p1) -- (3);
    
        \draw[-, \linecolor, line width=\linewidth, shorten >=(\radiustwo + .5\linewidth), shorten <=\pointwidth] (3) -- (2);
        \draw[dotted, \linecolor, line width=\linewidth, shorten <=\pointwidth, shorten >=.5\linewidth] (2) -- ($(2)!\radiustwo!(3)$);
        
        \draw[-, \linecolor, line width=\linewidth, shorten >=(\radiustwo + .5\linewidth), shorten <=(\radiusone + .5\linewidth)] (1) -- (2);
        \draw[dotted, \linecolor, line width=\linewidth, shorten <=\pointwidth, shorten >=.5\linewidth] (1) -- ($(1)!\radiusone!(2)$);
        \draw[dotted, \linecolor, line width=\linewidth, shorten <=\pointwidth, shorten >=.5\linewidth] (2) -- ($(2)!\radiustwo!(1)$);
    \end{tikzpicture}
    \caption{An instance of a 3-center completion problem. The centers $\hat{c}_1$ and $\hat{c}_2$ with corresponding radii $\hat{r}_1 = 1$ and $\hat{r}_2 = 0.5$ are already given. The underlying distances are given by $d(p,\hat{c}_1) = 1.5$, $d(p,\hat{c}_2) = 1$, $d(p,\bar{c}_3)=\sqrt{2}$. In $d'$, all distances to one of the centers $\hat{c}_1,\hat{c}_2$ are shortened by the respective radius $\hat{r}_1,\hat{r}_2$. Dotted parts indicate the segments that do not contribute to the distance $d'$. For example, $d'(p,\hat{c}_1) = 0.5$, $d'(p,\hat{c}_2) = 0.5$. However, distances not involving $\hat{c}_1$ or $\hat{c}_2$ as one of the end points stay the same, i.e., $d'(p,\bar{c}_3)=d(p,\bar{c}_3)$. Originally, the point $p$ is closer to $\bar{c}_3$ than to $\hat{c}_1$. But under $d'$, $p$ is closer to $\hat{c}_1$.
    This example also shows that the distance $d'$ does not fulfill the triangle inequality: While $d'(p,\bar{c}_3)=\sqrt{2}$, the detour via $\hat{c}_2$ is shorter: $d'(p,\hat{c}_2) + d(\hat{c}_2,\bar{c}_3) = 1$.}
    \label{fig:completion-instance}
\end{figure}

Figure~\ref{fig:completion-instance} shows an example instance for such a completion problem.
The \kcenterproblem can be solved by running an adapted farthest-first traversal starting with $c_1,\ldots,c_{\ell}$ and using distance function $d'$. 
For $i=\ell+1,\ldots,k$, always pick a point $x$ that maximizes $\min_{j \in [i]} d'(x,c_j)$ and set $c_i\coloneqq x$. When started with $\ell=0$ and $d$ instead of $d'$, this is known as farthest-first traversal or Gonzalez' algorithm \cite{G85}. The change is that we already have chosen the first $\ell$ centers and thus they differ from what Gonzalez' algorithm would have picked (and consequently, the remaining centers also differ), and also that the distance to the first $\ell$ points is not metric. 
The adapted algorithm is described in \Cref{alg:gonzalez}. \Cref{lemma:fft-somewhat-different} verifies that the algorithm still succeeds in computing a $2$-approximation.

\begin{algorithm} 
    \LinesNumbered
    \caption{\textsc{Farthest-first-traversal-completion}} \label{alg:gonzalez}
    \SetKwInOut{Input}{Input}
    \SetKwInOut{Output}{Output}
    \BlankLine
    \Input{Point set $P$, distance function $d$, integer $k$, centers $c_1,\ldots,c_i$, radii $r_1,\ldots,r_i$}
    \Output{Set of $k$ centers, assignment $\alpha$}
    \BlankLine
        \CommentSty{Update distance function}
        $d' \gets d$\\
        \For{$j=1,\ldots,i$}{
            \For{$p\in P$}{
                $d'(c_j,p) \gets \max\{d(c_j,p)-r_j,0\}$
            }
        }
        \CommentSty{Complete centers by farthest-first-traversal}
        \For{$j=i+1,\ldots,k$}{
            $c_{i+1} \gets \arg\max_{p\in P}\max_{c\in \{c_1,\ldots,c_i\}}d'(p,c)$
        }
        \CommentSty{Assign points to their closest centers}
        \For{$p\in P$}{
            $\alpha(p) \gets \arg\min_{c\in\{c_1,\ldots,c_k\}}d'(p,c)$
        }
        \Return $c_1,\ldots,c_k$, $\alpha$
\end{algorithm}

\begin{lemma}
\label{lemma:fft-somewhat-different}
    Running Algorithm~\ref{alg:gonzalez} with $d'$ and $c_1, \ldots, c_{\ell}$ already fixed yields a $2$-approximation for the $k$-center completion problem with input $(P,d,k, c_1, \ldots, c_{\ell},r_1,\ldots,r_{\ell}$).
\end{lemma}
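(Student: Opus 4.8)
The plan is to reproduce the classical analysis of Gonzalez' algorithm while isolating the single spot where metricity is used. Fix an optimal completion solution with predefined centers $c_1,\dots,c_\ell$, new centers $c^*_{\ell+1},\dots,c^*_k$, and assignment $\alpha^*$, and let $F^*$ be its objective, so that $d'(x,\alpha^*(x))\le F^*$ for all $x\in P$. Two preliminary observations about $d'$ are needed. First, in every value that the objective or the traversal ever inspects --- always of the form $d'(x,c)$ with $c$ a center --- the quantity $d'(x,c)$ equals $d(x,c)$, reduced (never below $0$) by the sum of the radius bonuses of those among $x$ and $c$ that belong to $\{c_1,\dots,c_\ell\}$; in particular these values are symmetric in $x$ and $c$. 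Second, although $d'$ is not a metric (see \Cref{fig:completion-instance}), the inequality $d'(x,z)\le d'(x,y)+d'(y,z)$ \emph{does} hold whenever the middle point $y$ carries no radius bonus, i.e.\ $y\notin\{c_1,\dots,c_\ell\}$: this follows at once from $d(x,z)\le d(x,y)+d(y,z)$ together with the fact that then no radius of $y$ is ever subtracted.

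Next I would record the standard monotonicity of farthest-first traversal. Let $\delta_j:=\min_{i<j} d'(c_j,c_i)$ be the coverage gap at the step where $c_j$ is chosen, for $\ell<j\le k$, and let $R:=\max_{p\in P}\min_{i\le k} d'(p,c_i)$ denote the objective of the returned solution, with $x^*$ a point attaining it (think of $x^*$ as a hypothetical $c_{k+1}$). Since taking a minimum over a larger index set never increases it, and since each $c_j$ is picked to maximize its own gap, one obtains $\delta_{\ell+1}\ge\delta_{\ell+2}\ge\dots\ge\delta_k\ge R$. No triangle inequality enters here.

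Now suppose for contradiction that $R>2F^*$. By the monotonicity above, all pairwise distances among $Y:=\{c_{\ell+1},\dots,c_k,x^*\}$ exceed $2F^*$, and, more importantly, $d'(c_i,y)>2F^*$ for every predefined center $c_i$ with $i\le\ell$ and every $y\in Y$, because $c_i$ was already present when $y$ acquired its gap. Hence $\alpha^*$ cannot map any $y\in Y$ to a predefined center, as that would force an effective distance $\le F^*$; so all $k-\ell+1$ points of $Y$ are assigned among the $k-\ell$ new centers. By pigeonhole there are $y\ne y'$ in $Y$ with $\alpha^*(y)=\alpha^*(y')=c^*$ for some new center $c^*$. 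As $y,y',c^*$ all lie outside $\{c_1,\dots,c_\ell\}$, the second observation applies with $c^*$ in the middle, giving $d'(y,y')\le d'(y,c^*)+d'(c^*,y')\le 2F^*$, contradicting $d'(y,y')>2F^*$. Therefore $R\le 2F^*$. Edge cases cause no trouble: if $F^*=0$, or if the traversal is ever forced to pick a point at gap $0$, then $R=0$ and the statement is immediate; and if $\ell=k$ then $Y=\{x^*\}$ and the same inequalities show $x^*$ must already lie within $F^*$ of some $c_i$, so $R\le F^*$.

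The main obstacle is the second observation and, relatedly, arranging the argument so that \emph{every} use of the triangle inequality has a bonus-free point in the middle. The non-metricity of $d'$ is caused exactly by shortcuts routed through a predefined center (this is what \Cref{fig:completion-instance} exhibits), so the proof must be organized so that the pigeonhole step always produces two points colliding on a \emph{newly opened} center rather than on one of $c_1,\dots,c_\ell$ --- which is precisely why we first argue that no point of $Y$ is assigned to a predefined center. The monotonicity step is compatible because it only ever compares a point $y\in Y$ with centers opened no later than $y$, and none of those distances benefits from $y$'s own bonus. With these pieces in place, the remainder is the usual Gonzalez argument.
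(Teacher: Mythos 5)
Your proposal is correct and follows essentially the same argument as the paper: the standard Gonzalez analysis applied to the $k+1$ points $c_{\ell+1},\dots,c_k,x^*$, with the two cases (some point of $Y$ assigned to a predefined center, versus a pigeonhole collision on a newly opened center where $d'$ coincides with $d$ and the triangle inequality is available) matching the paper's Case 1 and Case 2. The only difference is cosmetic — you phrase it as a proof by contradiction and make the restricted triangle inequality for $d'$ explicit, whereas the paper argues directly — so no further comparison is needed.
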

\begin{proof}
    We follow the proof for the approximation guarantee of Gonzalez' algorithm and verify that it still works in the case of the somewhat different $k$-center problem. 
    Let $D$ be the maximum distance of any point to its closest point in $\{c_1,\ldots,c_k\}$ with respect to $d'$, i.e., $D$ is the cost of the solution computed by \textsc{Farthest-first-traversal-completion} with $c_1, \ldots, c_{\ell}$ already fixed. Let $c_{k+1}$ be a point with $\min_{i \in [k]} d'(c_{k+1},c_i) = D$. Observe that all $c_i$ with $i \ge \ell +1$ satisfy that $d'(c_i,c_j)\ge D$ for all $j\in\{1,\ldots,\ell\}$ because otherwise $c_{k+1}$ would have been chosen as a center since its minimum distance is $D$. Inductively we also get for all $c_i, c_j$ with $i,j \ge \ell+1$ that $d'(c_i,c_j)\ge D$ is true because otherwise $c_{k+1}$ would have been chosen. Now we get to the point where the proof differs slightly from the original proof because we have a case distinction. We have $k+1$ points $c_1,\ldots,c_{k+1}$. In an optimum solution for the somewhat different $k$-center problem, we can assume that every point is assigned to its closest center, and in particular, all centers are assigned to themselves. There is always an optimum solution that satisfies this (this property is not necessarily ensured for clustering problems with constraints). Let $c_{\ell+1}^{\ast},\dots,c_k^{\ast}$ and $\alpha^{\ast}\colon P \to \{c_1,\ldots,c_{\ell},c_{\ell+1}^{\ast},\ldots,c_k^\ast\}$ be a such an optimal solution, i.e., $\alpha^{\ast}(c_i)=c_i$ for all $i \in [\ell]$.

    Case 1 is that for some $i\in \{\ell+1,\ldots,k+1\}$,  $\alpha^*(c_i)=c_j \in \{c_1,\ldots,c_\ell\}$, i.e., one of the points we picked as a center or the additional point $c_{k+1}$ is in the optimum solution assigned to one of the predefined centers. In this case, $OPT \ge d'(c_i,c_j) \ge D$ since we argued that all our centers have distance of at least $D$ to the predefined centers.

    Case 2 is that none of $c_{\ell+1},\ldots,c_{k+1}$ is assigned to any predefined center. Thus, they are all assigned to the $k-\ell$ centers $c_{\ell+1}^{\ast},\ldots,c_k^{\ast}$. By the pigeonhole principle, this means that $\alpha^{\ast}(c_i) = \alpha^{\ast}(c_j)=c_m^\ast$ for some $i, j \in \{\ell+1,\ldots,k+1\}$ and $m\in \{\ell+1,\ldots,k\}$. Since $i,j \in [\ell]$, $d'(c_i,c_j) \ge D$ as argued above. Also since $i,j,m \in [\ell]$, by definition, $d'(c_i,c_j)=d(c_i,c_j)$, $d'(c_i,c_m)=d(c_i,c_m)$  and $d'(c_j,c_m)=d(c_j,c_m)$. 
       
    We conclude by the triangle inequality that
    \begin{align*}
    D \leq d'(c_i, c_j) = d(c_i, c_j) &\leq d(c_i, c_m^{\ast}) + d(d_m^{\ast}, c_j) \\
    &\leq 2\max_{g=i,j}\{d(c_g, \alpha^{\ast}(c_g))\} \leq 2\max_{g\geq \ell+1}\{d(c_g, \alpha^{\ast}(c_g))\}.
    \end{align*}
    As $d(c_g, \alpha^*(c_g)) = d'(c_g, \alpha^*(c_g))$ for all $g\geq \ell+1$ by definition, this implies $\OPT \geq \frac{1}{2}D$.
\end{proof}

\subsection{Guessing an Approximate Radius Profile for the Optimum Solution}

In the full version, we obtain the following corollary that allows us to guess close approximations for all radii of the optimal $k$-MSR solution in FPT time.
Let $(r_1^*, \ldots, r_k^*)$ be the radius profile of an optimal solution. We call a radius profile $(\tilde{r}_1, \ldots, \tilde{r}_k)$ \emph{near-optimal} if $r_i^* \leq \tilde{r}_i \leq (1+\eps) r_i^*$ for all $i\in \{1, \ldots, k\}$.

\begin{restatable}{corollary}{computeradiusprofile}\label{cor:compute-radius-profile}
Let $(r_1^*, \ldots, r_k^*)$ be the radius profile of an optimal solution, and assume that we know the value of a constant-factor approximation solution for the corresponding $k$-center problem on the same instance. Then we can compute a set of size $O(\log^k_{1+\eps}(k/\eps))$ that contains a near-optimal radius profile $(\tilde{r}_1, \ldots, \tilde{r}_k)$ in time $O(k\log^k_{1+\eps}(k/\eps))$.
\end{restatable}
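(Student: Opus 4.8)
The plan is to show that a near-optimal radius profile can be found by a bounded search once we know a constant-factor estimate for the corresponding $k$-center cost. Let $F$ be the value of the constant-factor approximation for the corresponding $k$-center problem on the instance, so there is an absolute constant $\rho \ge 1$ with $F^* \le F \le \rho F^*$, where $F^*$ is the optimal $k$-center cost. First I would establish the key sandwiching fact: the largest radius $r_{\max}^* := \max_i r_i^*$ of any optimal $k$-MSR solution satisfies $F^* \le r_{\max}^* \le k F^*$. The lower bound holds because an optimal $k$-MSR solution, read as a $k$-center solution, has $k$-center cost $r_{\max}^*$, which cannot beat $F^*$; the upper bound holds because taking a near-optimal $k$-center solution and reading it as a $k$-MSR solution costs at most $k F^*$, so $r_{\max}^* \le \sum_i r_i^* \le k F^* $ (each single cluster of the $k$-center solution has radius at most $F^*$ — strictly speaking $\le 2F^*$ if we only have a $2$-approximation, but combining with $F \le \rho F^*$ this only changes constants inside the logarithm). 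Consequently every individual optimal radius $r_i^*$ lies in the set $\{0\} \cup [F^*, k F^*]$, and using $F$ in place of $F^*$, in $\{0\} \cup [F/\rho,\, \rho k F]$.

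Next I would discretize this range geometrically. Set $t := \lceil \log_{1+\eps}(\rho^2 k) \rceil = O(\log_{1+\eps}(k/\eps))$ and consider the candidate values $v_0 = 0$ and $v_j = (F/\rho)(1+\eps)^{j}$ for $j = 0, 1, \ldots, t$. For any $r_i^* > 0$ we have $r_i^* \in [F/\rho, \rho k F]$, so there is an index $j$ with $v_j \ge r_i^* $ and $v_j / (1+\eps) < r_i^*$, i.e. $r_i^* \le v_j \le (1+\eps) r_i^*$; for $r_i^* = 0$ we take $v_0 = 0$. Thus each coordinate of the optimal profile is approximated within a factor $1+\eps$ (and from above) by some element of the size-$(t{+}2)$ candidate set $V$. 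Taking the Cartesian product $V^k$ gives a set of $(t+2)^k = O(\log^k_{1+\eps}(k/\eps))$ radius profiles, one of which is $(\tilde r_1,\ldots,\tilde r_k)$ with $r_i^* \le \tilde r_i \le (1+\eps) r_i^*$ for all $i$, i.e. near-optimal. Enumerating $V^k$ (after computing $F$, which is given) takes time $O(k \cdot (t+2)^k) = O(k \log^k_{1+\eps}(k/\eps))$, since each of the $(t+2)^k$ tuples is written down in $O(k)$ time; this matches the claimed bound.

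The main technical subtlety — the ``hard part'' — is not the enumeration but pinning down the sandwich $r_{\max}^* \le k F^*$ cleanly and then propagating the constant-factor slack from $F$ to every coordinate without the range (and hence the logarithm's argument) blowing up beyond $O(k/\eps)$. Concretely one must be a little careful that we only know $F$, not $F^*$, and that the $k$-center routine we have in hand (Gonzalez / Algorithm~\ref{alg:gonzalez}) is itself only a $2$-approximation, so the true containment is $r_i^* \in \{0\}\cup[F/2,\, 2kF]$; the ratio of endpoints is $4k$, a constant times $k$, so $\log_{1+\eps}(4k) = O(\log_{1+\eps}(k/\eps))$ still holds and the stated bounds are unaffected up to the hidden constant. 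A second minor point to handle is the edge case $F = 0$ (all points coincide in each optimal cluster), where every $r_i^* = 0$ and the singleton profile $(0,\ldots,0)$ suffices. Everything else is routine bookkeeping.
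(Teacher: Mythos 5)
There is a genuine gap in your argument: the containment ``every individual optimal radius $r_i^*$ lies in $\{0\} \cup [F^*, kF^*]$'' is false for $i \ge 2$. The inequality $r^* \ge F^*$ is obtained by reading the optimal $k$-MSR solution as a $k$-center solution, and its $k$-center cost is the \emph{maximum} radius $r_1^*$; this gives $r_1^* \ge F^*$ but says nothing about the smaller radii. Indeed $r_i^*$ for $i\ge 2$ can be an arbitrarily small positive number: take $k=2$, one spread-out group of radius $5$ and two isolated points at mutual distance $10^{-3}$ far away, so that $F^* = r_1^* = 5$ while $r_2^* = 10^{-3}$. Your candidate set $V = \{0\} \cup \{(F/\rho)(1+\eps)^j\}_{j\le t}$ contains nothing in $(0, F/\rho)$, so no tuple in $V^k$ satisfies $r_2^* \le \tilde r_2 \le (1+\eps) r_2^*$, and your handling of the edge case $r_i^*=0$ does not rescue values that are small but nonzero.

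The missing idea — which is exactly why $\eps$ genuinely appears inside the logarithm in the stated bound, rather than entering only through cosmetic slack as in your write-up — is that one cannot multiplicatively approximate the small radii at all, but one can afford to round them up: replace every $r_i^* < (\eps/k)\, r_1^*$ by $(\eps/k)\, r_1^*$. This increases the sum of radii by at most $k\cdot(\eps/k)\, r_1^* = \eps\, r_1^* \le \eps \sum_j r_j^*$, i.e.\ a $(1+\eps)$-factor loss in the objective, after which all radii lie in $[(\eps/k) r_1^*,\; r_1^*]$, an interval of endpoint ratio $k/\eps$. A geometric grid with ratio $1+\eps$ on \emph{that} interval (with the grid for $r_1^*$ itself taken on $[F/\beta, kF]$ as you do) yields the claimed $O(\log^k_{1+\eps}(k/\eps))$ candidate profiles. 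Your sandwiching of $r_1^*$ and the geometric discretization are otherwise the same as the paper's; only this rounding step for the non-maximal radii is missing, and without it the argument fails.
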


\section{\texorpdfstring{Algorithm for $k$-Min-Sum-Radii with Mergeable Constraints}{Algorithm for k-Min-Sum-Radii with Mergeable Constraints}}

The aim of this section is to prove the following Theorem~\ref{maintheorem} which is proven in Section~\ref{subsec:assighment}, followed by Theorem~\ref{thm-1-1} for 1:1 fairness in \Cref{sec:1-1-fairness} and Corollary~\ref{cor-lower} for lower bounds in \Cref{sec:lower-bounds}.

\begin{restatable}{theorem}{maintheorem}\label{maintheorem}
    For every $\eps > 0 $, there exists an algorithm that computes a $(6-\frac{3}{k}+\eps)$-approximation for $k$-min-sum-radii with mergeable constraints in time $O((k \log_{1+\eps}(k/\eps))^k\cdot \poly(n))$ if the corresponding constrained $k$-center problem has a constant factor approximation algorithm.
\end{restatable}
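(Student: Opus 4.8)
The plan is to braid together the three ideas outlined in the introduction — (i) guessing clusters from a $k$-center solution, (ii) eliminating clusters via the $k$-center completion problem, (iii) growing an existing cluster instead of opening a new one — into a single sweep over the optimal clusters in order of decreasing radius, and then to convert the resulting ball cover into a feasible clustering. Fix an optimal solution with centers $c_1^\ast,\dots,c_k^\ast$ and radii $r_1^\ast\ge\dots\ge r_k^\ast$, and write $\OPT=\sum_t r_t^\ast$. First I would run the assumed constant-factor approximation for the corresponding constrained $k$-center problem and hand its value to \Cref{cor:compute-radius-profile} to get a small set of candidate radius profiles, one of which is near-optimal; the algorithm runs the sweep below for every candidate (sorted) profile $(\tilde r_1,\dots,\tilde r_k)$ with $r_t^\ast\le\tilde r_t\le(1+\eps)r_t^\ast$ and keeps the cheapest feasible clustering produced.

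\emph{The main sweep.} For a fixed profile I would iterate $t=1,\dots,k$ while maintaining a list of \emph{output balls} described by centers $c_1,\dots,c_\ell$ with $\ell\le t-1$ and radii $\rho_1,\dots,\rho_\ell$, together with the invariant that every optimal cluster $C_1^\ast,\dots,C_{t-1}^\ast$ lies inside one of the balls $B(c_j,\rho_j)$. In step $t$ I invoke \Cref{alg:gonzalez} with these $c_j$'s and $\rho_j$'s as the predefined part (step $1$ is plain farthest-first traversal). By the invariant, all points of already-processed clusters have $d'$-distance $0$ to their covering center, and completing with the remaining $k-\ell\ge k-t+1$ centers placed at $c_t^\ast,\dots,c_k^\ast$ certifies that the optimum of this $k$-center completion instance is at most $r_t^\ast$; hence \Cref{lemma:fft-somewhat-different} returns a completion solution of cost at most $2r_t^\ast$. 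I then guess the index $j_t$ of the completion-cluster containing $c_t^\ast$. If $c_{j_t}$ is a freshly opened center it carries no radius reduction, so $d(c_{j_t},c_t^\ast)\le 2r_t^\ast$ and $C_t^\ast\subseteq B(c_{j_t},3r_t^\ast)$; I open a new output ball of radius $3\tilde r_t$. If $c_{j_t}=c_j$ is an existing output center with radius $\rho_j$, then $d(c_j,c_t^\ast)\le\rho_j+2r_t^\ast$ and $C_t^\ast\subseteq B(c_j,\rho_j+3r_t^\ast)$; I grow that ball by $3\tilde r_t$. Either way the invariant survives, so after $k$ steps I hold output balls $B_1,\dots,B_\ell$ with $\ell\le k$, covering $P$, with each $C_t^\ast$ inside one of them, and with $\sum_j\rho_j=\sum_t 3\tilde r_t\le 3(1+\eps)\OPT$.

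\emph{From ball cover to feasible clustering, and the final bound.} I build the graph on $\{B_1,\dots,B_\ell\}$ joining two balls that share a point, take its connected components $K_1,\dots,K_q$ (so $q\le\ell\le k$), and make the $a$-th cluster $\bigcup_{B_j\in K_a}B_j$ centered at the largest-radius ball of $K_a$. These unions are pairwise disjoint (a common point would create an edge across components) and cover $P$, and each is exactly a union of optimal clusters, because every $C_t^\ast$ lies in one ball and hence in one component; thus the resulting partition is a coarsening of the optimal clustering, reachable by repeated merges, and therefore feasible by \Cref{def:mergeable-constraint}. Walking a simple path in the overlap graph, consecutive output centers are within the sum of their two radii, so a telescoping estimate bounds the radius of the $a$-th cluster by $2\sum_{B_j\in K_a}\rho_j-\max_{B_j\in K_a}\rho_j$. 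Summing over $a$, the computed cost is at most $2\sum_j\rho_j-\sum_a\max_{B_j\in K_a}\rho_j\le 6(1+\eps)\OPT-\rho_1$, and since $\rho_1\ge 3\tilde r_1\ge 3r_1^\ast=3r_{\max}^\ast\ge\tfrac3k\OPT$ this is $(6-\tfrac3k+6\eps)\OPT$; rescaling $\eps$ yields the claimed ratio.

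\emph{Running time and the hard part.} The running time is the number of candidate profiles, $O(\log^k_{1+\eps}(k/\eps))$, times the at most $k^k$ possible guess sequences (at most $k$ choices for each of $j_1,\dots,j_k$), times a polynomial overhead per run for the constrained $k$-center call, the $k$ invocations of \Cref{alg:gonzalez}, and the component clean-up — altogether $O((k\log_{1+\eps}(k/\eps))^k\cdot\poly(n))$. I expect the crux to be the invariant bookkeeping that pins the completion optimum at $r_t^\ast$ in every step — where the "covered so far" property, the exact accounting of free center slots, and the grow-don't-open rule all have to line up — together with the final tightening to $6-\tfrac3k$, which relies on charging only one radius per connected component and on the first output ball having radius at least $3r_{\max}^\ast\ge\tfrac3k\OPT$. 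Degenerate situations (an optimal center coinciding with a predefined center, or terminating with fewer than $k$ nonempty clusters) are harmless, since shrinking distances in $d'$ and using fewer clusters only help the $k$-MSR objective.
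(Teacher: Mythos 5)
Your proposal follows essentially the same route as the paper: guessing a near-optimal radius profile, the iterative $k$-center-completion sweep with the open-a-new-ball-or-grow-an-existing-one branching, the coverage invariant pinning the completion optimum at $r_t^*$, the $3(1+\eps)\OPT$ bound on the total ball radius, the connected-component merge justified by mergeability, and the telescoping path bound with one max-radius saved per component. The only (immaterial) deviation is in the final accounting, where you lower-bound the saved term by $\rho_1\ge 3r_1^*\ge\tfrac{3}{k}\OPT$ while the paper uses $\hat{r}_{\max}\ge\tfrac{1}{k}\sum_j\hat{r}_j$; both yield the claimed $(6-\tfrac{3}{k}+\eps)$ factor after rescaling $\eps$.
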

\vspace{1em}
\noindent Our algorithm works in two steps. 
First, the algorithm computes a candidate set of $k$ radii and centers based on guessing. If it guesses correctly, the induced balls form a feasible $k$-min-sum-radii solution with certain properties. However, it might not fulfill the mergeable constraint yet. What it means to guess correctly is defined later in \Cref{def:guessing-correctly} after the algorithm is specified.
Notice that for this part, we need no assumptions about the constraint aside from the fact that an approximation algorithm for the $k$-center problem under this mergeable constraint exists. 
We need mergeability only in the computation of the final assignment in Section~\ref{subsec:assighment}. 

In the second step of the algorithm, we compute an assignment of points to the candidate centers. If the center and radius candidates from the first step are appropriate, then this assignment is guaranteed to fulfill the mergeable constraint.  

In the following, we fix an optimal solution that we are trying to find. It consists of clusters $C_1^*, \ldots, C_k^*$, with centers $c_1^*, \ldots, c_k^*$ and radii $r_1^*, \ldots, r_k^*$. We will assume that the optimal radii are sorted decreasingly. All clusters necessarily fulfill the mergeable constraint. Furthermore, we assume that we are in an iteration where we consider the radius profile $\tilde{r}_1, \ldots, \tilde{r}_k$ satisfying $r_j^* \leq \tilde{r}_j \leq (1+\eps)r_j^*$ for all $j\leq k$. We also say that such a radius profile is near-optimal. Such an iteration exists due to \Cref{cor:compute-radius-profile}. During this run of the algorithm, we are constructing candidate centers $\hat{c}_1,\ldots,\hat{c}_{k}$ and candidate radii $\hat{r}_1, \ldots, \hat{r}_k$. In summary, we have the following notation to be aware of during the following:
\begin{itemize}
    \item $C_1^*, \ldots, C_k^*$ denote an optimal clustering for $k$-MSR with mergeable constraint with centers $c_1^*, \ldots, c_k^*$ and radii $r_1^*\geq \ldots\geq r_k^*$
    \item $\tilde{r}_1, \ldots, \tilde{r}_k$ denote initial near-optimal radius profile such that $r_j^* \leq \tilde{r}_j \leq (1+\eps)r_j^*$ for all $j\leq k$ 
    \item $\hat{c}_1, \ldots, \hat{c}_i$, $\hat{r}_1, \ldots, \hat{r}_i$ denote candidate centers and radii constructed up to iteration $i$
\end{itemize}

\subsection{Selection of Candidate Centers and Radii} \label{sec:initial-selection-of-centers-and-radii}
The general idea of the algorithm is as follows:
Assume that in the beginning of iteration $i$, we already fixed candidate centers $\hat{c}_1,\ldots,\hat{c}_{i-1}$ and candidate radii $\hat{r}_1,\ldots,\hat{r}_{i-1}$. We compute a 2-approximation for the induced $k$-center completion instance. 
The resulting output consists of centers $\hat{c}_1,\ldots,\hat{c}_{i-1},\bar{c}_i,\ldots,\bar{c}_{k}$, radii $\hat{r}_1,\ldots,\hat{r}_{i-1},\bar{r}_i,\ldots,\bar{r}_{k}$ and an assignment $\alpha$. 
We guess $\alpha(c_i^*)$, i.e.\ where the $i$-th center of the optimal solution is assigned to in the $k$-center completion solution. Recall that we already have a good approximation for $\tilde{r_i}$ for the corresponding optimal solution radius $r_i^*$.
If we guess that $\alpha(c_i^*)$ is among the newly chosen centers, i.e.\ if $ \alpha(c_i^*) = \bar{c}_j \in \{\bar{c}_i,\ldots,\bar{c}_{k}\}$, we open a new ball with radius $\hat{r}_i = 3\tilde{r}_i$ at this center $\hat{c}_i \coloneqq \bar{c}_j$. 
Otherwise, if we guess that $\alpha(c_i^*) = \hat{c}_j \in \{\hat{c}_1,\ldots,\hat{c}_{i-1}\} $, there already exists a ball around this center, and we only need to enlarge this ball by $3\tilde{r}_i$. In order to have $k$ centers in the end, we set $\hat{c}_i$ to some arbitrary point and $\hat{r}_i = 0$.

The guessing of the center assignments can be handled as follows. In every iteration $i$, we have $k$ possible choices for $\alpha(c^*_i)$. So each sequence of $k$ ``guesses'' can be encoded by a tuple $a = (a_1, \ldots, a_k)\in \{1, \ldots, k\}^{k}$, where $a_i = \ell$ means that in the $i$th iteration, we choose the $\ell$-th center of the $k$-center completion solution as $\alpha(c_i^*)$ (i.e. $\hat{c}_\ell$ if $\ell < i$, or $\bar{c}_\ell$ if $\ell \geq i$). Thus, we can emulate the guessing by generating all such tuples upfront, computing the candidate balls for each of these, and choosing the best feasible one in the end.
For a formal description of the algorithm, see \Cref{alg:centers-and-radii}.

\begin{algorithm} 
    \LinesNumbered
    \caption{\textsc{Centers-and-radii}}
    \label{alg:centers-and-radii}
    \SetKwInOut{Input}{Input}
    \SetKwInOut{Output}{Output}
    \BlankLine
    \Input{Points $P$, distances $d$, $k\in\mathbb{N}$, radius profile $(\tilde{r_1}, \ldots, \tilde{r_k})$, tuple $(a_1, \ldots, a_k)$}
    \Output{Set of $k$ centers, set of $k$ radii}
    \BlankLine
        $I_0 \gets (P,d,k,\emptyset, \emptyset)$\\
        \For{$i=1,\ldots,k$}{
            $(S_{kcc},\alpha) \gets $ \textsc{Farthest-first-traversal-completion}($I_{i-1}$) , where $S_{kcc} = \{\hat{c}_1,\ldots,\hat{c}_{i-1}, \bar{c}_i,\ldots, \bar{c}_k\}$ and $\alpha\colon P\to S_{kcc}$ \label{alg-line:call-farthest-first-traversal}\\
            \If{$a_i < i$}{
            \CommentSty{guess that $\alpha(c_i^*) \in \{\hat{c}_1, \ldots, \hat{c}_{i-1}\}$}
                Set $\hat{r}_{a_i} \gets \hat{r}_{a_i} + 3\tilde{r}_{i}$, choose $\hat{c}_i$ arbitrarily, set $\hat{r}_i \gets 0$\label{alg-line:enlarge-ball}\\
            }
            \ElseIf{$a_i \geq i$}{
            \CommentSty{guess that $\alpha(c_i^*) \in \{\bar{c}_i, \ldots, \bar{c}_{k}\}$}
                Set $\hat{c}_i \gets \bar{c}_{a_i}, \hat{r}_{i} \gets 3\tilde{r}_{i}$ \label{alg-line:new-ball}\\
            }
            $I_{i} \gets (P,d,k,\{\hat{c}_1,\ldots,\hat{c}_{i}\},\{\hat{r}_1,\ldots,\hat{r}_{i}\})$\\
        }
        \Return $\{\hat{c}_1,\ldots,\hat{c}_k\}$, $\{\hat{r}_1,\ldots,\hat{r}_k\}$
\end{algorithm}

In the full version, we show an example run of \Cref{alg:centers-and-radii}.
With this notation and \Cref{alg:centers-and-radii} in place, we can now formally define what it means to guess correctly.

\begin{definition}[Guessing correctly] \label{def:guessing-correctly}
    Given a solution $(S_{kcc},\alpha)$ for the $k$-center completion problem with input $\hat{c}_1,\ldots,\hat{c}_{i-1}$ and $\hat{r}_1,\ldots,\hat{r}_{i-1}$.
    We say that \Cref{alg:centers-and-radii} guesses correctly if the input tuple $a$ is such that in every iteration $i$, $a_i$ is a correct guess of the assignment of the next optimal center under $\alpha$. To be more precise, $a_i$ is the smallest index in $\{1,\ldots,k\}$ such that $ c_{a_i} = \alpha(c_i^*) $ with $ S_{kcc} = \{c_1,\ldots, c_k\}$, where $c_i^*$ is the center of the next optimal cluster $C_i^*$.
\end{definition}

The idea of \Cref{alg:centers-and-radii} is that it fully covers one so-far uncovered optimal cluster in every iteration (under the assumption that the initial radius profile is near-optimal and \Cref{alg:centers-and-radii} guesses correctly). For the analysis, we need the following Lemma that bounds the cost of an optimal $k$-center completion solution in any iteration of \Cref{alg:centers-and-radii} by the radius of the largest remaining optimal cluster that is not fully covered yet. Combining with \Cref{lemma:fft-somewhat-different} gives an upper bound on the distance between an optimal center $c^*$ and the center $\alpha(c^*)$ it is assigned to. 

\begin{lemma} \label{lem:k-center-completion-upper-bound-r*}
    Assume that up to the end of iteration $i$, \Cref{alg:centers-and-radii} chose centers $\hat{c}_1, \ldots, \hat{c}_i$ and radii $\hat{r}_1, \ldots, \hat{r}_i$ such that for all $p\in \bigcup_{j\leq i}C_j^*$, there exists a center $\hat{c}_\ell \in \{\hat{c}_1, \ldots, \hat{c}_i\}$ such that $d'(p, \hat{c}_{\ell}) = 0$. Then, the value of an optimal solution for the \kcenterproblem with input $\{\hat{c}_1,\ldots,\hat{c}_i\}, \{\hat{r}_1,\ldots,\hat{r}_i\}$ is at most $r_{i+1}^*$. 
\end{lemma}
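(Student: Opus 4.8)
The plan is to prove the bound by exhibiting an explicit feasible solution to the \kcenterproblem instance $(P,d,k,\{\hat{c}_1,\dots,\hat{c}_i\},\{\hat{r}_1,\dots,\hat{r}_i\})$ whose cost is at most $r_{i+1}^*$; since the optimum can only be smaller, this suffices. The natural candidate is to complete the predefined centers $\hat{c}_1,\dots,\hat{c}_i$ with the optimal $k$-MSR centers of the not-yet-covered clusters, i.e.\ to use the center set $\{\hat{c}_1,\dots,\hat{c}_i,c_{i+1}^*,\dots,c_k^*\}$. (If some $c_j^*$ happens to coincide with one of $\hat{c}_1,\dots,\hat{c}_i$, so that this set has fewer than $k$ elements, we pad with arbitrary distinct points; extra centers never increase a $k$-center objective.) First I would fix the assignment $\alpha$ for this solution. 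Since $C_1^*,\dots,C_k^*$ partition $P$, every point $x$ lies in some $C_j^*$: if $j\le i$, the hypothesis of the lemma provides a center $\hat{c}_\ell\in\{\hat{c}_1,\dots,\hat{c}_i\}$ with $d'(x,\hat{c}_\ell)=0$, and I set $\alpha(x)=\hat{c}_\ell$; if $j>i$, I set $\alpha(x)=c_j^*$.

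Next I would bound $d'(x,\alpha(x))$ in each of the two cases. For $x$ with $j\le i$ the contribution is exactly $0$ by the choice of $\alpha(x)$. For $x$ with $j>i$, the key observation is that $d'(x,y)\le d(x,y)$ for every ordered pair $(x,y)$: in each branch of the definition of $d'$, one either subtracts nonnegative radii and truncates at $0$, or leaves $d$ unchanged. Hence $d'(x,c_j^*)\le d(x,c_j^*)\le r_j^*$, where the last inequality is just the definition of the radius $r_j^*=\max_{p\in C_j^*}d(p,c_j^*)$ of the optimal cluster $C_j^*$. Since the optimal radii are sorted decreasingly and $j>i$, we get $r_j^*\le r_{i+1}^*$. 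Combining the two cases, $\max_{x\in P}d'(x,\alpha(x))\le r_{i+1}^*$, which is exactly the claimed upper bound on the optimal completion cost. (If $i=k$, all points are covered at $d'$-distance $0$, and the statement is read with the convention $r_{k+1}^*:=0$.)

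The step that needs the most care is the bookkeeping around the non-metric distance $d'$. One must check that the case analysis in its definition really yields $d'\le d$ for every pair — in particular when $\alpha(x)=c_j^*$ but $c_j^*$ also occurs among the predefined centers (so that the ``$x=c_i$, $y=c_j$'' branch applies and \emph{two} radii are subtracted), and when the point $x$ being assigned is itself one of $\hat{c}_1,\dots,\hat{c}_i$. In all these sub-cases the truncation $\max\{\cdot,0\}$ together with the subtraction of nonnegative radii only decreases the value, so $d'(x,c_j^*)\le d(x,c_j^*)$ still holds; this is the one place where the failure of the triangle inequality for $d'$ (cf.\ \Cref{fig:completion-instance}) could otherwise cause trouble, so I would isolate it as an explicit sub-claim ($d'(x,y)\le d(x,y)$ for all $x,y\in P$) and dispatch it before the main argument.
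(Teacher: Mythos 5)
Your proposal is correct and follows essentially the same route as the paper's proof: both exhibit the feasible completion $\{c_{i+1}^*,\dots,c_k^*\}$ with the assignment that sends covered points to a zero-$d'$-distance predefined center and each $x\in C_h^*$ ($h>i$) to $c_h^*$, then bound the cost by $r_{i+1}^*$ using the decreasing order of the optimal radii. Your explicit isolation of the sub-claim $d'(x,y)\le d(x,y)$ is a welcome bit of extra care that the paper leaves implicit, but it does not change the argument.
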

\begin{proof}
    Consider the center extension  $\{c^*_{i+1}, \ldots, c^*_k\}$. 
    By the precondition of the lemma, we can assign every point in $p\in \bigcup_{j\leq i}C_j^*$ to a center in $\{\hat{c}_1, \ldots, \hat{c}_i\}$ at distance $0$ with respect to $d'$. For all $h \ge i+1$, any $x \in C_h^*$ can be assigned to $c_h^*$ at distance $\le r_h^*$. As the optimal radii are sorted in decreasing order, $r_{i+1}^*$ is the largest remaining radius among the optimal clusters under $d'$. Hence, the resulting assignment $\alpha'$ satisfies $d'(x,\alpha'(x))\le r_{i+1}^*$. Notice that $\{c^*_{i+1}, \ldots, c^*_k\}$ and $\alpha'$ form 
    a feasible solution for the \kcenterproblem and that the maximum radius of this solution is $r_{i+1}^*$ as argued above.
    Hence, the optimum value for the \kcenterproblem is upper bounded by $r^*_{i+1}$. 
\end{proof}

Now, we can show that there exists a surjective mapping $\varphi\colon \{C_1^*,\ldots,C_k^*\} \to \hat{\mathcal{B}} $, where $\hat{\mathcal{B}}$ is the collection of balls from $\{B(\hat{c}_1,\hat{r}_1),\ldots,B(\hat{c}_k, \hat{r}_k)\}$ for which $\hat{r}_i > 0$, such that $ C_j^*\subseteq \varphi(C_j^*)$ for all $j\leq k$.
The next Lemma formalizes this. 
 
\begin{lemma}\label{lem:injective-mapping}
    Assume that our guess of the initial radius profile is near-optimal and that \Cref{alg:centers-and-radii} guesses correctly.
    Let $\hat{\mathcal{B}}$ denote the set of balls $B(\hat{c}_1,\hat{r}_1), \ldots, B(\hat{c}_k,\hat{r}_k)$ found by \Cref{alg:centers-and-radii}. Then the following two statements hold true
    \begin{enumerate}
        \item for all $j\leq k$, there exists $\ell\leq k$ such that $C_j^*\subseteq B(\hat{c}_{\ell},\hat{r}_{\ell})$
        \item for all $\ell \leq k$, if $r_{\ell}>0$ then there exists $j\leq k$ such that $C_j^*\subseteq B(\hat{c}_{\ell}, \hat{r}_{\ell})$. 
    \end{enumerate}
\end{lemma}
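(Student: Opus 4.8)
The plan is to prove both items simultaneously by induction on the iteration counter $i=1,\ldots,k$ of \Cref{alg:centers-and-radii}, carrying along the invariant: \emph{at the end of iteration $i$, for every $p\in\bigcup_{j\le i}C_j^*$ there is a candidate center $\hat c_\ell$ with $\ell\le i$ and $d'(p,\hat c_\ell)=0$ in the distance function $d'$ of the completion instance $I_i$.} This invariant is exactly the hypothesis of \Cref{lem:k-center-completion-upper-bound-r*}, so it is what lets the induction feed into the next iteration; and since $d'(p,\hat c_\ell)=0$ implies $d(p,\hat c_\ell)\le\hat r_\ell$, i.e.\ $p\in B(\hat c_\ell,\hat r_\ell)$, at $i=k$ it already yields item~1 ($C_j^*$ being handled in iteration $j$).

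For the inductive step in iteration $i$: by the invariant for $i-1$ and \Cref{lem:k-center-completion-upper-bound-r*}, the optimal $k$-center completion value for $I_{i-1}$ is at most $r_i^*$, so by \Cref{lemma:fft-somewhat-different} the solution $(S_{kcc},\alpha)$ returned in \Cref{alg-line:call-farthest-first-traversal} has cost at most $2r_i^*$; in particular $d'(c_i^*,\alpha(c_i^*))\le 2r_i^*$, and correct guessing gives $\alpha(c_i^*)=c_{a_i}$. I would then split on whether $c_{a_i}$ is a freshly opened center ($a_i\ge i$, \Cref{alg-line:new-ball}) or an already existing one ($a_i<i$, \Cref{alg-line:enlarge-ball}). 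Unfolding the definition of $d'$: in the fresh case $d'(c_i^*,c_{a_i})=d(c_i^*,c_{a_i})$, so for every $p\in C_i^*$ the (metric) triangle inequality in $d$ gives $d(p,\hat c_i)\le d(p,c_i^*)+d(c_i^*,\hat c_i)\le r_i^*+2r_i^*=3r_i^*\le 3\tilde r_i=\hat r_i$ (using $r_i^*\le\tilde r_i$ by near-optimality); in the enlargement case $d'(c_i^*,\hat c_{a_i})\ge d(c_i^*,\hat c_{a_i})-\hat r_{a_i}^{\mathrm{old}}$, so the same computation yields $d(p,\hat c_{a_i})\le 3r_i^*+\hat r_{a_i}^{\mathrm{old}}\le \hat r_{a_i}^{\mathrm{old}}+3\tilde r_i=\hat r_{a_i}^{\mathrm{new}}$. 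Either way $C_i^*\subseteq B(\hat c_\ell,\hat r_\ell)$ for the updated index ($\ell=i$ resp.\ $\ell=a_i$), hence $d'(p,\hat c_\ell)=0$, and since radii never decrease in later iterations the invariant also survives for the clusters handled earlier. This closes the induction and proves item~1.

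The step I expect to be the main obstacle is the degenerate sub-case where $c_i^*$ coincides with an already fixed candidate center $\hat c_m$, $m<i$: then $c_i^*$ is itself a center of the completion instance, so \textsc{Farthest-first-traversal-completion} assigns it to itself at $d'$-distance $0$, forcing $\alpha(c_i^*)=\hat c_m$ and $a_i=m<i$ — i.e.\ we are necessarily in the enlargement branch, and there one argues directly $d(p,\hat c_{a_i})=d(p,c_i^*)\le r_i^*\le\hat r_{a_i}^{\mathrm{new}}$ without invoking the $d'$-bound. Excluding this sub-case is precisely what licenses replacing $d'$ by $d$ in the fresh branch above, and more generally one has to be careful that $d'$ is \emph{not} a metric: it only satisfies $d(x,y)\le d'(x,y)+\hat r_{(x)}+\hat r_{(y)}$ with $\hat r_{(\cdot)}=0$ for non-predefined points, so all triangle-inequality arguments must be carried out in $d$ and the $d'$-information converted at the right moments.

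Finally, item~2 follows by tracking when a radius turns positive. If $\hat r_\ell>0$ at the end, then either $\hat c_\ell$ was opened as a fresh ball in iteration $\ell$ with $\hat r_\ell=3\tilde r_\ell>0$, in which case $C_\ell^*\subseteq B(\hat c_\ell,3r_\ell^*)\subseteq B(\hat c_\ell,\hat r_\ell)$ by the inductive step and monotonicity of radii, or $\hat r_\ell$ was increased by $3\tilde r_i>0$ in some later iteration $i>\ell$ via \Cref{alg-line:enlarge-ball} (this is the only way a center chosen arbitrarily in an enlargement branch can ever acquire a positive radius), in which case the inductive step applied to iteration $i$ gives $C_i^*\subseteq B(\hat c_\ell,\hat r_\ell)$ for the final radius. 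This also shows that the map $\varphi$ from the paragraph before the lemma, sending each $C_j^*$ to the ball it is assigned to in iteration $j$, is well defined and surjective onto $\hat{\mathcal B}$.
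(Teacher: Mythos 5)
Your proposal is correct and follows essentially the same route as the paper's proof: the same induction over iterations, the same use of \Cref{lem:k-center-completion-upper-bound-r*} and \Cref{lemma:fft-somewhat-different} to get $d'(c_i^*,\alpha(c_i^*))\le 2r_i^*$, the same case split between fresh and enlarged balls (including the separate treatment of $c_i^*\in\{\hat c_1,\ldots,\hat c_{i-1}\}$), and the same conversion from $d'$ back to $d$ via the triangle inequality. The only cosmetic difference is that you phrase the invariant as $d'(p,\hat c_\ell)=0$ rather than $C_j^*\subseteq B(\hat c_\ell,\hat r_\ell)$, which is equivalent, and your use of $d(c_i^*,\hat c_{a_i})\le d'(c_i^*,\hat c_{a_i})+\hat r_{a_i}$ as an inequality is in fact slightly more careful than the paper's equality.
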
 
\begin{proof}
    We show that the statement holds at the end of every iteration of \Cref{alg:centers-and-radii}. That is, we show that for all $i\leq k$, the following holds
    \begin{enumerate}[(1)]
        \item for all $j\leq i$, there exists $\ell\leq i$ such that $C_j^*\subseteq B(\hat{c}_{\ell},\hat{r}_{\ell})$
        \item for all $\ell \leq i$, if $r_{\ell}>0$ then there exists $j\leq i$ such that $C_j^*\subseteq B(\hat{c}_{\ell}, \hat{r}_{\ell})$. 
    \end{enumerate}
    Then setting $i=k$ implies the result.
    We prove this via induction over $i\leq k$.
    For $i=0$, the statements are trivially fulfilled.
    Now let the statement be fulfilled at the end of iteration $i-1< k$ for some $i>0$. 
    
    By \Cref{lem:k-center-completion-upper-bound-r*}, $\OPT_{kcc} \leq r_i^*$, where $\OPT_{kcc}$ is the value of an optimal solution for the $k$-center completion problem that takes the centers and radii generated until the end of iteration $i-1$ as input. By \Cref{lemma:fft-somewhat-different}, \textsc{Farthest-first-traversal-completion} in Line~\ref{alg-line:call-farthest-first-traversal} of \Cref{alg:centers-and-radii} computes a 2-approximation for the $k$-center completion problem. These two arguments together imply $d(c_i^*,\alpha(c_i^*)) \leq 2\OPT_{kcc} \leq 2r_i^*$.
    
    If $c_i^*=\hat{c}_j$ for some $j\leq i-1$, then for all $p\in C_i^*$, it is $d(p,\hat{c}_j) = d(p,c_i^*) \leq r_i^* \leq r_j^*$, where the last inequality holds because the optimal radii are sorted decreasingly. 
    
    If \Cref{alg:centers-and-radii} guesses correctly, $c_{a_i} = \alpha(c_i^*) = c_i^* = \hat{c}_j$ and the radius $\hat{r}_j^{\text{new}}$ produced in Line~\ref{alg-line:enlarge-ball} fulfills $\hat{r}_j^{\text{new}} \coloneqq \hat{r}_j + 3\tilde{r}_i \geq r_i^*$. Therefore $C_i^*$ is covered completely by $B(\hat{c}_j,\hat{r}_j^{\text{new}})$. This implies that (1) holds. For index $i$, the algorithm creates a new ball with radius $\hat{r}_i = 0$. Therefore, statement (2) is fulfilled by the induction hypothesis.

    Now, we assume that $c_i^*\notin \{\hat{c}_1,\ldots,\hat{c}_{i-1}\}$.
    There are two cases for the guess $a_i$.
    \begin{enumerate}
        \item Either $a_i < i$. Then, we are in Line~\ref{alg-line:enlarge-ball} of \Cref{alg:centers-and-radii} and enlarge an already existing ball centered at $\alpha(c_i^*) = \hat{c}_{a_i}$ by $3\tilde{r}_i$, i.e., the $a_i$-th ball is $B(\hat{c}_{a_i},\hat{r}_{a_i} + 3\tilde{r}_i)$ at the end of the iteration. 
        For every $p\in C_i^*$,
        \[ d(p,\hat{c}_{a_i}) \leq d(p,c_i^*) + d(c_i^*,\hat{c}_{a_i}) =d(p,c_i^*) + d'(c_i^*,\hat{c}_{a_i}) + \hat{r}_{a_i} = d(p,c_i^*) + d'(c_i^*,\alpha(c_i^*)) + \hat{r}_{a_i}. \]
        It is $d(p,c_i^*) \leq r_i^*$ as $p\in C_i^*$, and $d'(c_i^*,\alpha(c_i^*)) \leq 2r_i^*$ as $\alpha$ is the assignment given by the 2-approximation. Further, $r_i^* \leq \tilde{r}_i$. Overall, $d(p,\hat{c}_{a_i}) \leq 3\tilde{r}_i + \hat{r}_{a_i}$, which implies that the ball $B(\hat{c}_{a_i},\hat{r}_{a_i} + 3\tilde{r}_i)$ covers $C_i^*$ completely.
        \item Or $a_i \geq i$. In this case, the algorithm creates a new ball $B(\hat{c}_i,\hat{r}_i)$ with $\hat{c}_i \coloneqq c_{a_i} = \alpha(c_i^*)$ and $\hat{r}\coloneqq 3\tilde{r}_i$.
        For every $p\in C_i^*$,
        \[ d(p,\hat{c}_i) = d(p,\alpha(c_i^*)) \leq d(p,c_i^*) + d(c_i^*,\alpha(c_i^*)) = d(p,c_i^*) + d'(c_i^*,\alpha(c_i^*)), \]
        where the last equality holds because $c_i^*\not\in \{\hat{c}_1,\ldots,\hat{c}_{i-1}\}$ and $a_i$ is the smallest index such that $c_{a_i} = \alpha(c_i^*)$, which implies $c_{a_i} \not\in \{\hat{c}_1,\ldots,\hat{c}_{i-1}\}$.
        Again, $d(p,c_i^*) \leq r_i^*$ and $d'(c_i^*,\alpha(c_i^*)) \leq 2r_i^*$. Overall, $d(p,\hat{c}_i) \leq 3r_i^* \leq 3\tilde{r}_i$, and hence, $C_i^*$ is completely covered by $B(\hat{c}_i,\hat{r}_i)$.
    \end{enumerate}
\end{proof}
Notice that the candidate balls might overlap, but the optimal clusters are pairwise disjoint by definition of a clustering. The following lemma relates the total cost of the clustering consisting of the candidate balls to the cost of an optimal $k$-min-sum-radii with mergeable constraints solution. This will be useful for analyzing the cost of our final solution later on. Notice that this statement does not imply an approximation ratio for the vanilla $k$-min-sum-radii problem.

\begin{lemma}\label{lem:algo-apx}
    Let $\hat{r}_1,\ldots,\hat{r}_k$ be the radii produced by Alg.~\ref{alg:centers-and-radii}. Then $\sum_{j=1}^k \hat{r}_j \leq 3(1+\eps)\sum_{j=1}^k r_j^*$.
\end{lemma}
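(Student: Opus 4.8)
The plan is to track the quantity $\sum_{j=1}^k \hat r_j$ across the iterations of \Cref{alg:centers-and-radii} and to observe that every iteration contributes exactly the same amount to it, namely $3\tilde r_i$, regardless of which of the two branches (Line~\ref{alg-line:enlarge-ball} or Line~\ref{alg-line:new-ball}) is taken. Concretely, I would prove by induction on $i$ the invariant that, at the end of iteration $i$, the current values satisfy $\sum_{j=1}^{i}\hat r_j = 3\sum_{\ell=1}^{i}\tilde r_\ell$. The base case $i=0$ is the empty sum. For the inductive step, suppose the invariant holds after iteration $i-1$. If $a_i < i$, the algorithm replaces $\hat r_{a_i}$ by $\hat r_{a_i}+3\tilde r_i$ for some index $a_i\le i-1$ and additionally fixes $\hat r_i = 0$, so the running total grows by exactly $3\tilde r_i$. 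If $a_i \ge i$, the algorithm sets $\hat r_i = 3\tilde r_i$ and leaves $\hat r_1,\dots,\hat r_{i-1}$ untouched, so again the total grows by exactly $3\tilde r_i$. In both cases the invariant is restored, and taking $i=k$ yields the identity $\sum_{j=1}^k\hat r_j = 3\sum_{\ell=1}^k\tilde r_\ell$.

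The claim then follows immediately from the standing assumption that the radius profile is near-optimal, i.e.\ $\tilde r_\ell \le (1+\eps)r_\ell^*$ for every $\ell\le k$: summing gives
\[
  \sum_{j=1}^k \hat r_j \;=\; 3\sum_{\ell=1}^k \tilde r_\ell \;\le\; 3(1+\eps)\sum_{\ell=1}^k r_\ell^*.
\]
Note that no assumption about guessing correctly is needed here — the identity $\sum_j \hat r_j = 3\sum_\ell \tilde r_\ell$ holds for every tuple $a$ — so the only real input to the argument is the near-optimality of $(\tilde r_1,\dots,\tilde r_k)$. There is no substantial obstacle in this proof; the one thing to be careful about is the bookkeeping in the enlargement branch: setting $\hat r_i = 0$ in Line~\ref{alg-line:enlarge-ball} is what prevents the index-$i$ slot from being double-counted later, and the condition $a_i \le i-1$ guarantees that the slot being enlarged already exists and lies among $\hat r_1,\dots,\hat r_{i-1}$.
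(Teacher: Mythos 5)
Your proof is correct and follows essentially the same route as the paper's: an induction on the iteration index showing that each iteration adds exactly $3\tilde{r}_i$ to the running sum of candidate radii (the paper states the invariant as an inequality, you as an equality, but the argument is identical), followed by the near-optimality bound $\tilde{r}_j \le (1+\eps)r_j^*$. Your added remark that the identity holds for every guess tuple $a$ is accurate and consistent with the paper, which likewise does not invoke correct guessing in this lemma.
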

\begin{proof}
    We show by induction that $\sum_{j=1}^i \hat{r}_j \leq 3\cdot \sum_{j=1}^i \tilde{r}_j$ for all $i\leq k$. Then for $i=k$, the result follows since $\tilde{r}_j \leq (1+\eps)r_j^*$ for all $j\leq k$.

    For $i=0$, the statement trivially holds.
    Now assume that the statement holds for $i-1$.
    Either the algorithm sets $\hat{r}_i \coloneqq 3\tilde{r}_i$ during iteration $i$. Then, 
    $ \sum_{j=1}^i \hat{r}_j = \sum_{j=1}^{i-1}\hat{r}_j + \hat{r}_i \leq 3\sum_{j=1}^{i-1}\tilde{r}_j + 3\tilde{r}_i$.
    For the remaining case, let $\hat{r}_j^{(i-1)}$ denote the value of $\hat{r}_j$ at the beginning of the $i$th iteration, and $\hat{r}_j^{(i)}$ its value at the end of the iteration, $j\leq i$.
    There exists $\ell <i$ such that the algorithm sets $\hat{r}_\ell^{(i)} \coloneqq \hat{r}_\ell^{(i-1)} + 3\tilde{r}_i$ and $\hat{r}_i^{(i)} \coloneqq 0$. Then,
    $ \sum_{j=1}^i \hat{r}_j^{(i)} = \sum_{j=1}^{i-1} \hat{r}_j^{(i)} = \sum_{j=1}^{i-1} \hat{r}_j^{(i-1)} + 3\tilde{r}_i \leq 3\cdot \sum_{j=1}^i \tilde{r}_j. $
\end{proof}

\subsection{Finding the Assignment}
\label{subsec:assighment}
In the following, we will show how to find a feasible assignment. We construct a graph from center and radii candidates computed in the first part of the algorithm and observe that the clustering induced by the connected components of this graph fulfills the given mergeable constraint. We define the \emph{access graph} $G=(V,E)$ as follows. The set of vertices corresponds to the given point set, i.e.\ $V=P$. We add an edge between any pair of vertices $x,y\in V$ iff $x=\hat{c}_i$ for a center $\hat{c}_i$ constructed in Algorithm~\ref{alg:centers-and-radii} and $d(y,\hat{c}_i) \leq \hat{r}_i$ for the corresponding radius $\hat{r}_i$. The construction is exemplified in \Cref{fig:graph-construction}.
A connected component is a maximal connected subgraph of $G$. Let $\CC(G)$ denote the set of connected components in $G$.
Covering a connected component $Z\in \CC(G)$ using one large cluster is not more expensive than covering it using the balls $B(\hat{c},\hat{r})$ for all $\hat{c}\in Z$.

\begin{figure}
    \centering
    \begin{tikzpicture}
        \clip (-2,-3) rectangle (8,2);
        
        \def \linecolor  {black}
        \def \centerwidth  {2pt}
        \def \pointwidth  {1.2pt}
        \def \linewidth  {.5pt}
        
        \coordinate (c1) at (0,-0.3);
        \coordinate (c2) at (2,0);
        \coordinate (c3) at (3.8,-0.7);
        \coordinate (c4) at (7,-1.4);

        \node at ($(c1) + (0.25,-0.25)$) {$\hat{c}_1$};
        \node at ($(c2) + (0,0.3)$) {$\hat{c}_2$};
        \node at ($(c3) + (0,0.3)$) {$\hat{c}_3$};
        \node at ($(c4) + (-0.25,0.25)$) {$\hat{c}_4$};
        
        \coordinate (11) at ($ (c1) + (0.1,.6) $);
        \coordinate (12) at ($ (c1) + (1.2,.4) $);
        \coordinate (13) at ($ (c1) + (-.5,.3) $);
        \coordinate (14) at ($ (c1) + (-.3,-.5) $);
        \coordinate (15) at ($ (c1) + (-1,0) $);
        \coordinate (16) at ($ (c1) + (-.4,1) $);
        
        \coordinate (21) at ($ (c2) + (.8,-.2) $);
        
        \coordinate (31) at ($ (c3) + (.8,-.2) $);
        \coordinate (32) at ($ (c3) + (.4,-.9) $);
        
        \coordinate (41) at ($ (c4) + (.3,.2) $);
        \coordinate (42) at ($ (c4) + (0,-.4) $);
        
        \foreach \p in {c1,c4,16,12,31}
        {
            \draw[fill=orange, draw=orange] (\p) circle (\pointwidth) {};
        }
        \foreach \p in {c2,c3,11,13,14,15,21,32,41,42}
        {
            \draw[fill=blue, draw=blue] (\p) circle (\pointwidth) {};
        }
			
        \foreach \p in {c1,c4}
        {
            \draw[fill=orange, draw=orange] (\p) circle (2pt) {};
        }
        \foreach \p in {c2,c3}
        {
            \draw[fill=blue, draw=blue] (\p) circle (2pt) {};
        }

        \draw[draw=black] (c1) circle (1.5) {};
        \draw[draw=black] (c2) circle (1) {};
        \draw[draw=black] (c3) circle (1.2) {};
        \draw[draw=black] (c4) circle (.8) {};

        \draw[-, gray, line width=\linewidth, shorten <=\centerwidth, shorten >=\linewidth] (c1) -- node[below right] {$\hat{r}_1$} ($(c1)+(0,-1.5)$);
        \draw[-, gray, line width=\linewidth, shorten <=\centerwidth, shorten >=\linewidth] (c2) -- node[right] {$\hat{r}_2$} ($(c2)+(0,-1)$);
        \draw[-, gray, line width=\linewidth, shorten <=\centerwidth, shorten >=\linewidth] (c3) -- node[left] {$\hat{r}_3$} ($(c3)+(0,-1.2)$);
        \draw[-, gray, line width=\linewidth, shorten <=\centerwidth, shorten >=\linewidth] (c4) -- node[below] {$\hat{r}_4$} ($(c4)+(.8,0)$);

        \foreach \p in {11,12,13,14,15,16}
        {
            \draw[-, \linecolor, line width=\linewidth, shorten <=\centerwidth, shorten >=\pointwidth] (c1) -- (\p);
        }
        
        \foreach \p in {21,12}
        {
            \draw[-, \linecolor, line width=\linewidth, shorten <=\centerwidth, shorten >=\pointwidth] (c2) -- (\p);
        }
        
        \foreach \p in {31,32,21}
        {
            \draw[-, \linecolor, line width=\linewidth, shorten <=\centerwidth, shorten >=\pointwidth] (c3) -- (\p);
        }
        
        \foreach \p in {41,42}
        {
            \draw[-, \linecolor, line width=\linewidth, shorten <=\centerwidth, shorten >=\pointwidth] (c4) -- (\p);
        }
\end{tikzpicture}
\caption{An instance of a $k$-min-sum-radii problem with exact fairness constraint with two colors and a blue:orange ratio of 2:1. The larger dots indicate centers and the gray lines indicate the radii output by Alg.~\ref{alg:centers-and-radii}. The black circles show the induced balls $B(\hat{c}_i, \hat{r}_i)$. The black lines between points represent the edges of the induced access graph. Note that the balls themselves are not necessarily fair, but every connected component is.}
\label{fig:graph-construction}
\end{figure}
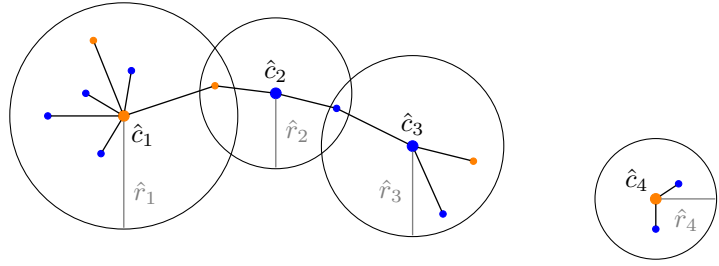

\begin{lemma} \label{lem:connected-component-is-fair}
    Assume Algorithm~\ref{alg:centers-and-radii} made the correct decision in each iteration and terminates with centers $\hat{C} = \left\{\hat{c}_1,\ldots, \hat{c}_k\right\}$ and radii $\hat{r}_1,\ldots,\hat{r}_k$.
    Let $G = (V,E)$ be the corresponding  access graph.
    Let $Z\in \CC(G)$ be a connected component of $G$. Then assigning all vertices from $Z$ to an arbitrary point in $Z$ yields a cluster that is feasible with respect to the given mergeable constraint.
\end{lemma}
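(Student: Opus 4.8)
The plan is to show that each connected component $Z$ of the access graph $G$ is a union of optimal clusters, and then invoke mergeability. The key observation is \Cref{lem:injective-mapping}: under the assumption that \Cref{alg:centers-and-radii} guessed correctly, every optimal cluster $C_j^*$ is fully contained in one of the candidate balls $B(\hat c_\ell,\hat r_\ell)$, and every ball with positive radius contains at least one such $C_j^*$. Moreover, each candidate center $\hat c_\ell$ is itself a point of $P$, hence a vertex of $G$, and it is adjacent (by the definition of the access graph) to exactly the points inside $B(\hat c_\ell,\hat r_\ell)$.

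First I would argue that every vertex of $G$ lies in some optimal cluster --- this is immediate since the optimal clusters partition $P$. Next, the crucial step: I claim that for every connected component $Z$, if a point $p\in C_j^*$ lies in $Z$, then all of $C_j^*$ lies in $Z$. By \Cref{lem:injective-mapping}(1), $C_j^*\subseteq B(\hat c_\ell,\hat r_\ell)$ for some $\ell$ with $\hat r_\ell>0$; in particular $\hat c_\ell$ is adjacent to every point of $C_j^*$, so the subgraph induced on $\{\hat c_\ell\}\cup C_j^*$ is connected (a star centred at $\hat c_\ell$). Since $p\in C_j^*$ belongs to $Z$, and $Z$ is a maximal connected subgraph, this entire star --- hence all of $C_j^*$ and the center $\hat c_\ell$ --- must lie in $Z$. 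Consequently $Z$ is a disjoint union $\bigcup_{j\in J} C_j^*$ of optimal clusters for some index set $J\subseteq\{1,\dots,k\}$ (disjointness is just the fact that the $C_j^*$ partition $P$).

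Finally I would finish by mergeability. The optimal clustering $(C_1^*,\dots,C_k^*)$ is feasible with respect to the mergeable constraint. A cluster that is the union of the optimal clusters $\{C_j^*: j\in J\}$ is obtained by iteratively merging those $|J|$ clusters; by \Cref{def:mergeable-constraint} each merge preserves feasibility, so the union $Z$ is a feasible cluster, and assigning all points of $Z$ to an arbitrary point of $Z$ is exactly the merge operation. Doing this for every connected component partitions $P$ into at most $k$ feasible clusters (at most $k$ since there are $k$ balls and each component contains at least one center with positive radius, or is a singleton isolated vertex that can be absorbed --- though one should double-check the isolated-vertex case, since an isolated vertex is a point covered by no ball; however, by \Cref{lem:injective-mapping}(1) every point lies in some optimal cluster which lies in some ball, so every vertex is adjacent to some center and there are no isolated vertices). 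This yields the claim.

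The main obstacle I expect is the bookkeeping around which vertices are guaranteed to be covered: one must be careful that the direction of \Cref{lem:injective-mapping} actually used is ``every point lies in an optimal cluster, and that optimal cluster is fully inside some positive-radius ball,'' so that the star argument applies to \emph{every} point of $Z$, not just to centers. A secondary subtlety is ensuring the number of resulting clusters is exactly $k$ (or can be padded to $k$ with empty clusters) so that the output is a valid $k$-clustering; this is a minor point but worth stating explicitly.
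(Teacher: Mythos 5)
Your proposal is correct and follows essentially the same route as the paper: both rest on \Cref{lem:injective-mapping}(1) to show that any optimal cluster meeting a connected component $Z$ lies entirely inside $Z$ (you argue this directly via the star centred at the covering ball's center, the paper via contradiction), and both then conclude by mergeability that the union of these optimal clusters is feasible. The extra remarks about isolated vertices and padding to $k$ clusters are beyond what the lemma requires but harmless.
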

\begin{proof}
    Let $Z \in \CC(G)$ be a connected component of $G$. Let $\V(Z)$ denote the set of vertices of $Z$.
    We will show that $\V(Z)$ consists solely of $\ell$ optimal clusters that all lie entirely in $\V(Z)$ for some $\ell\geq 1$.
    As optimal clusters fulfill the mergeable constraint, the union of these also fulfills the mergeable constraint.

    Every point $p\in \V(Z)$ lies in some optimal cluster. Hence, there exists at least one optimal cluster that intersects $\V(Z)$. We want to conclude that such a cluster already is completely contained in $\V(Z)$.
    Assume for a contradiction that there exists an optimal ball $C^*$ such that $C^*\cap \V(Z) \neq \emptyset$ and $C^* \not\subseteq \V(Z)$. 
    By \Cref{lem:injective-mapping}, there exists a ball $B(\hat{c},\hat{r})$ such that $C^* \subseteq B(\hat{c},\hat{r})$. Let $v\in C^*\setminus \V(Z)$. Then $d(v,\hat{c}) \leq \hat{r}$ and therefore $\hat{c}$ must be part of the connected component $Z$, a contradiction.
\end{proof}
We can use this insight as follows: For every connected component $Z\in \CC(G)$, pick one of the centers $\hat{c} \in \hat{C} \cap \V(Z)$ that lie inside the connected component and assign all points in $\V(Z)$ to $\hat{c}$. 
This way, we get a solution that contains one cluster per connected component. To achieve the smallest possible cost guarantee, we set $\hat{c}$ with the largest corresponding $\hat{r}$ as the final center.

\begin{lemma}\label{lem:final-approximation-ratio}
    Let $\hat{C} = \{\hat{c}_1,\ldots,\hat{c}_k\}$, $\hat{r}_1,\ldots,\hat{r}_k$ and $G=(V,E)$ as in \Cref{lem:connected-component-is-fair}. For every connected component $Z$, we choose the center $\cCC{Z}{} \in \hat{C}\cap \V(Z)$ such that the corresponding radius $\rCC{Z}{}$ is maximal among all radii in the connected component. Then, the solution $(\mathscr{C},f)$ with $\mathscr{C} = \{\cCC{Z}{} \mid Z \in \CC(G)\}$ and $f\colon P \to \mathscr{C}$ with $f(z) = \cCC{Z}{}$ for all $z\in Z$ and for all connected components $Z$ is a $(6-\frac{3}{k}+\eps)$-approximation for the $k$-min-sum-radii problem under a mergeable constraint.
\end{lemma}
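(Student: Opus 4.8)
The plan is to decouple feasibility from the cost estimate. Feasibility comes for free from \Cref{lem:connected-component-is-fair}: under correct guessing, each component $Z\in\CC(G)$ is a disjoint union of complete optimal clusters, so by mergeability, collapsing $Z$ onto any of its points — in particular onto $\cCC{Z}{}\in\hat C\cap\V(Z)$ — produces a cluster satisfying the constraint; if fewer than $k$ components arise, one pads $\mathscr{C}$ with unused points carrying empty clusters of radius $0$. Thus everything reduces to showing $\sum_{Z\in\CC(G)}\max_{p\in\V(Z)}d(p,\cCC{Z}{})\le(6-\tfrac3k+\eps)\sum_{j=1}^k r_j^\ast$.

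The core is a per-component radius bound. Fix $Z$; let $Z_{\mathrm{pos}}$ be the set of centers $\hat c_i\in\V(Z)$ with $\hat r_i>0$, put $t_Z=|Z_{\mathrm{pos}}|$ and $R_Z=\sum_{\hat c_i\in\V(Z)}\hat r_i$. First I would observe that every point of $\V(Z)$ lies inside a ball $B(\hat c_i,\hat r_i)$ with $\hat c_i\in Z_{\mathrm{pos}}$: each point sits in an optimal cluster, which by \Cref{lem:injective-mapping} is contained in some candidate ball $B(\hat c_\ell,\hat r_\ell)$, and by the edge rule of the access graph the point is adjacent to $\hat c_\ell$, hence $\hat c_\ell\in\V(Z)$ and $\hat r_\ell>0$. (A degenerate singleton component has $R_Z=0$, is itself a radius-$0$ optimal cluster, and contributes $0$ everywhere; set it aside.) Next I would form the ball-intersection graph $H$ on $Z_{\mathrm{pos}}$ (two balls adjacent iff they share a point), argue $H$ is connected by tracing paths of the access graph, root a spanning tree of $H$ at the largest ball $B(\cCC{Z}{},\rCC{Z}{})$, and telescope the triangle inequality along the tree path from the root to the ball containing a given point $p$. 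Since a tree path meets each ball at most once, this yields $d(p,\cCC{Z}{})\le\rCC{Z}{}+2(R_Z-\rCC{Z}{})=2R_Z-\rCC{Z}{}$ for all $p\in\V(Z)$. Finally, $\rCC{Z}{}$ is the maximum of $t_Z$ nonnegative reals summing to $R_Z$, so $\rCC{Z}{}\ge R_Z/t_Z$, and the cluster built from $Z$ has radius at most $(2-1/t_Z)R_Z$.

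It then remains to sum up. Since the algorithm produces exactly $k$ candidate balls, $\sum_Z t_Z\le k$, so $t_Z\le k$ and $2-1/t_Z\le 2-1/k$ for every component; and $\sum_Z R_Z=\sum_{j=1}^k\hat r_j\le 3(1+\eps)\sum_{j=1}^k r_j^\ast$ by \Cref{lem:algo-apx}. Hence the cost of $(\mathscr{C},f)$ is at most $(2-\tfrac1k)\sum_Z R_Z\le 3(2-\tfrac1k)(1+\eps)\sum_{j=1}^k r_j^\ast=(6-\tfrac3k)(1+\eps)\sum_{j=1}^k r_j^\ast\le(6-\tfrac3k+6\eps)\sum_{j=1}^k r_j^\ast$, and running the algorithm with $\eps/6$ in place of $\eps$ gives the stated $(6-\tfrac3k+\eps)$-guarantee.

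The hard part will be the telescoping step: one must route strictly through pairwise-intersecting balls (arbitrary geodesics ignore the ball structure), derive connectivity of $H$ from that of the access graph, and charge each ball's radius only once along the tree path so that the bound is the additive $2R_Z-\rCC{Z}{}$ rather than something growing with the tree depth. Everything else — feasibility, the $R_Z/t_Z$ averaging, and the final summation — should go through routinely; the only place to be slightly careful is to verify that the coarse bound $2-1/t_Z\le 2-1/k$ is exactly what yields the $-3/k$ term, attained when a single component absorbs all $k$ balls.
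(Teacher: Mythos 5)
Your proposal is correct and follows essentially the same route as the paper's proof: a chain of pairwise-overlapping balls within each component, the triangle inequality charging every radius at most twice except the maximal one (charged once), and an averaging step turning that saving into the $-3/k$ term; the paper phrases the chain as a shortest alternating path in the access graph rather than a spanning tree of a ball-intersection graph, and applies the averaging $\hat r_{\max}\ge\frac1k\sum_j\hat r_j$ globally rather than per component, but these differences are cosmetic and yield the identical bound $(2-\frac1k)\sum_j\hat r_j\le(6-\frac3k)(1+\eps)\sum_j r_j^*$.
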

\begin{proof}
    Since each cluster in the solution corresponds to exactly one connected component of $G$, \Cref{lem:connected-component-is-fair} implies that the solution fulfills the mergeable constraint.

    It remains to prove the approximation factor.
    Let $Z\in \CC(G)$ be a connected component in $G$. 
    Let $v^Z \in \arg\max_{p\in Z}d(\cCC{Z}{},v^Z)$. 
    There exists a path from $\cCC{Z}{}$ to $v^Z$ in $Z$. 
    A shortest such path $\cCC{Z}{}, v^1, \cCC{Z}{1}, v^2, \cCC{Z}{2}, \ldots v^{\ell}, \cCC{Z}{\ell_Z}, v^Z$ with $\ell_Z \leq k$ alternatingly visits points in $\hat{C}$ and $\V(Z)\setminus \hat{C}$. 
    Therefore, its length is bounded by $\sum_{i\leq \ell_Z} 2\rCC{Z}{i} - \rCC{Z}{\max} $, where $\rCC{Z}{\max} \coloneqq \max_{i\colon \hat{c}_i\in Z}\rCC{Z}{i}$.
     The radius of a cluster with center $\cCC{Z}{}$ is given by $d(v^Z,\cCC{Z}{})$. Hence, the sum of the radii of such clusters is bounded by
    \[ \sum_{Z\in \CC(G)} d(v^Z, \cCC{Z}{}) \leq \sum_{Z\in \CC(G)}\biggl(\sum_{i\leq \ell_Z} 2\rCC{Z}{i} - \rCC{Z}{\max}\biggr) = \sum_{j=1}^k 2\hat{r}_j - \sum_{Z\in \CC(G)}\rCC{Z}{\max} \]
    where the second equality holds because a graph's connected components are disjoint. 
    There exists a connected component $Z'$ such that $\rCC{Z'}{\max} = \max_{i\leq k}\hat{r}_i \eqqcolon \hat{r}_{\max}$. Therefore, $\sum_{Z\in \CC(G)}\rCC{Z}{\max} \geq \hat{r}_{\max}$. Further, $\hat{r}_{\max} \geq \frac{1}{k}\sum_{j=1}^k \hat{r}_j$.
    Hence,
    \[ \sum_{j=1}^k 2\hat{r}_j - \sum_{Z\in \CC(G)}\rCC{Z}{\max} \leq \bigl(2- \frac{1}{k}\bigr)\sum_{j=1}^k \hat{r}_j,\] 
    and by \Cref{lem:algo-apx},
    \[ \bigl(2- \frac{1}{k}\bigr)\sum_{j=1}^k \hat{r}_j \leq 3\bigl(1+\eps\bigr)\bigl(2-\frac{1}{k}\bigr)\sum_{j=1}^k r^*_j = \bigl(6-\frac{3}{k}\bigr) \bigl(1+\eps\bigr) \sum_{j=1}^k r^*_j.\]
\end{proof}
\Cref{alg:k-MSR-with-constraints} finds such a solution. Now we are ready to prove our main Theorem.

\begin{algorithm}
    \LinesNumbered
    \caption{\textsc{Assignment}}
    \label{alg:fair-assignment-general-case}
    \SetKwInOut{Input}{Input}
    \SetKwInOut{Output}{Output}
    \BlankLine
    \Input{Graph $G=(V,E)$, distance function $d$, set of centers $\hat{c}_1,\ldots,\hat{c}_k$}
    \Output{Set of $ \leq k$ centers $\mathscr{C}$, assignment $f$}
    \BlankLine
        $ \mathscr{C} \gets \emptyset$\\
        \For{each connected component $Z$ of $G$}{
            find a center $\hat{c} \in Z\cap \hat{C}$ such that $\hat{r}$ is largest\\
            $ \mathscr{C} \gets \mathscr{C} \cup \{\hat{c}\}$\\
            \For{all $p\in Z$}{
                $f(p) \gets \hat{c}$
            }
        }
        \Return $ \mathscr{C} $, $f$
\end{algorithm}

\begin{algorithm}
    \LinesNumbered
    \caption{\textsc{\texorpdfstring{$k$-min-sum-radii with mergeable constraints}{k-min-sum-radii with mergeable constraints}}}
    \label{alg:k-MSR-with-constraints}
    \SetKwInOut{Input}{Input}
    \SetKwInOut{Output}{Output}
    \BlankLine
    \Input{Point set $P$, distance function $d$, $k\in \mathbb{N}$}
    \Output{Set of $ \leq k$ centers $\mathscr{C}$, assignment $f$}
    \BlankLine
        $U \gets (6+\eps)\max_{x,y\in P}d(x,y)$\hfill\CommentSty{upper bound on the sum of radii cost}\
        $\mathscr{R} \gets \text{set of radius profile guesses}$\label{alg-line:radius-guesses}\\
        \ForAll{$(\tilde{r_1}, \ldots, \tilde{r_k})\in \mathscr{R}$}{
            \ForAll{$a \in \{1,\ldots, k\}^k$}{
                $(\{\hat{c}_1,\ldots, \hat{c}_k\}, \{\hat{r}_1,\ldots, \hat{r}_k\}) \gets  \textsc{Centers-and-radii}(P, d, k, (\tilde{r_1}, \ldots, \tilde{r_k}), a$) \label{alg-line:call-centers-and-radii}\\
                compute the access graph $G$ based on $\{\hat{c}_1,\ldots, \hat{c}_k\} $ and $ \{\hat{r}_1,\ldots, \hat{r}_k\}$\\
                $ (\mathscr{C},f) \gets $ \textsc{Assignment}($G$, $d$, $\{\hat{c}_1,\ldots, \hat{c}_k\}$)\\
                \If{$(\mathscr{C},f)$ is feasible and $\MSR(\mathscr{C},f) < U$}{
                    $(\mathscr{C}^*, f^*) \gets (\mathscr{C},f)$\\
                    $U \gets \MSR(\mathscr{C},f)$\\
                }
            }
        }
        \Return $\mathscr{C}^*,f^*$
\end{algorithm}

\maintheorem*
\begin{proof}
    We invoke \Cref{alg:centers-and-radii} for all possible guesses of radius profiles and center assignments. For each of these, we compute the access graph $G$ and invoke \Cref{alg:fair-assignment-general-case} to obtain a solution. 
    By \Cref{lem:final-approximation-ratio}, this solution is feasible and a $(6-\frac{3}{k}+\eps)$-approximation, assuming that \Cref{alg:centers-and-radii} guesses correctly. Since it iterates over all possible guesses, we can be sure that in one of the iterations we do indeed guess correctly. 
    In the end, we return the best solution found, whose cost can therefore be upper bounded by $(6-\frac{3}{k}+\eps)$ times the optimum.
    
    To be able to guess a radius profile, we first need to compute an approximate solution for constrained $k$-center, which can be done in polynomial time. By \Cref{cor:compute-radius-profile}, we can then construct the set $\mathscr{R}$ in Line~\ref{alg-line:radius-guesses} of \Cref{alg:k-MSR-with-constraints} in time $O(k \log_{1+\eps} (k/\eps)^k)$. 
    The outer for-loop in line 3 then goes through $|\mathscr{R}|$ iterations, which is $O(\log_{1+\eps} (k/\eps)^k)$. The inner for-loop in line 4 goes through $k^k$ iterations. 
    So in total, lines 5-10 are invoked $O\left((k \log_{1+\eps}(k/\eps))^k\right)$ times. The runtime of one call to \textsc{Centers-and-radii} is dominated by the runtime of the calls to \textsc{Farthest-first-traversal-completion}. This in turn has the same asymptotic running time as Gonzalez' algorithm, which can be implemented to run in $O(kn)$. So we can bound the runtime of \textsc{Centers-and-radii} by $O(k^2 n)$. The construction of the access graph can be performed in $O(kn)$, as can one call to \textsc{Assignment}. The feasibility of a solution can be checked in $O(n)$. 
    Thus, we obtain an overall running time of $O\left((k \log (k/\eps))^k \cdot \poly(n)\right)$. 
\end{proof} 
\subsubsection{Fairness with two Colors and Equal Proportions} \label{sec:1-1-fairness}
We can get better guarantees for the exact fairness constraint with two colors and equal proportions. 
In this case, we can find a fair assignment such that none of the radii $\hat{r}$ has to be enlarged.
The idea is that we first compute a fair micro clustering (i.e.\ partition $P$ into fair pairs) and then assign these pairs to a common center.

Let $ P = \Gamma_1\cup \Gamma_2$, i.e., $P$ consists of two different colors, $\gamma_1, \gamma_2$. We set $\Gamma_1 = \gamma^{-1}(\gamma_1),\ \Gamma_2 = \gamma^{-1}(\gamma_2)$ as the sets of points carrying the respective color. Here, $\lvert \Gamma_1 \rvert = \lvert \Gamma_2 \rvert = \frac{n}{2}$.
We want to partition the set $P$ into pairs consisting of two points $p_1,p_2$ where $p_1\in \Gamma_1$ and $p_2\in \Gamma_2$. This is equivalent to finding a perfect matching of size $\frac{n}{2}$ between $\Gamma_1$ and $\Gamma_2$.

For this, we construct a flow network where there is an edge between $p_1\in \Gamma_1$ and $p_2\in \Gamma_2$ if and only if they have access to a common center $\hat{c}$, i.e., iff there exists $\hat{c}\in \hat{C}$ with $d(\hat{c},p_1) \leq \hat{r}$ and $d(\hat{c},p_2) \leq \hat{r}$.
We connect all nodes of $\Gamma_1$ to some vertex $s$ and all nodes of $\Gamma_2$ to some vertex $t$.
We set the capacities of all the edges of this network to 1.
Computing a perfect matching between points in $\Gamma_1$ and $\Gamma_2$ corresponds to finding a flow with value $\frac{n}{2}$ in the given network. Such a flow exists if \Cref{alg:centers-and-radii} guessed correctly.

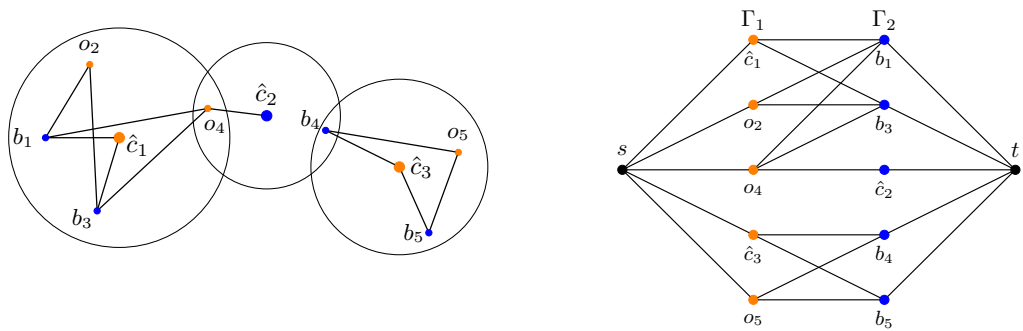
\begin{figure}
    \centering
    \begin{subfigure}[t]{0.5\textwidth}
        \centering
        \resizebox{\linewidth}{!}{
        \begin{tikzpicture}
            \clip (-2,-3) rectangle (5.1,2);
            \def \linecolor  {black}
            \def \centerwidth  {2pt}
            \def \pointwidth  {1.2pt}
            \def \linewidth  {.5pt}
            
            \coordinate (c1) at (0,-0.3);
            \coordinate (c2) at (2,0);
            \coordinate (c3) at (3.8,-0.7);
            \coordinate (c4) at (7,-1.4);
        
            \node at ($(c1) + (0.25,-0.1)$) {$\hat{c}_1$};
            \node at ($(c2) + (0,0.3)$) {$\hat{c}_2$};
            \node at ($(c3) + (0.3,0)$) {$\hat{c}_3$};
            
            \coordinate (12) at ($ (c1) + (1.2,.4) $);
            \coordinate (14) at ($ (c1) + (-.3,-1) $);
            \coordinate (15) at ($ (c1) + (-1,0) $);
            \coordinate (16) at ($ (c1) + (-.4,1) $);
        
            \node at ($(12) + (0.1,-0.25)$) {\small $o_4$};
            \node at ($(14) + (-0.2,-0.1)$) {\small $b_3$};
            \node at ($(15) + (-0.3,0)$) {\small $b_1$};
            \node at ($(16) + (0,0.25)$) {\small $o_2$};
            
            \coordinate (21) at ($ (c2) + (.8,-.2) $);
        
            \node at ($(21) + (-0.2,0.15)$) {\small $b_4$};
            
            \coordinate (31) at ($ (c3) + (.8,.2) $);
            \coordinate (32) at ($ (c3) + (.4,-.9) $);
        
            \node at ($(31) + (0,0.25)$) {\small $o_5$};
            \node at ($(32) + (-0.2,0)$) {\small $b_5$};
            
            
            \foreach \p in {c1,16,12,31,c3}
            {
                \draw[fill=orange, draw=orange] (\p) circle (\pointwidth) {};
            }
            \foreach \p in {c2,14,15,21,32}
            {
                \draw[fill=blue, draw=blue] (\p) circle (\pointwidth) {};
            }
                
            \foreach \p in {c1,c3}
            {
                \draw[fill=orange, draw=orange] (\p) circle (2pt) {};
            }
            \foreach \p in {c2}
            {
                \draw[fill=blue, draw=blue] (\p) circle (2pt) {};
            }
            \draw[draw=black] (c1) circle (1.5) {};
            \draw[draw=black] (c2) circle (1) {};
            \draw[draw=black] (c3) circle (1.2) {};

            \foreach \p in {14,15}
            {
                \draw[-, \linecolor, line width=\linewidth, shorten <=\centerwidth, shorten >=\pointwidth] (c1) -- (\p);
                \draw[-, \linecolor, line width=\linewidth, shorten <=\pointwidth, shorten >=\pointwidth] (12) -- (\p);
                \draw[-, \linecolor, line width=\linewidth, shorten <=\pointwidth, shorten >=\pointwidth] (16) -- (\p);
            }
            
            \draw[-, \linecolor, line width=\linewidth, shorten <=\centerwidth, shorten >=\pointwidth] (c2) -- (12);
            
            \foreach \p in {32,21}
            {
                \draw[-, \linecolor, line width=\linewidth, shorten <=\centerwidth, shorten >=\pointwidth] (c3) -- (\p);
                \draw[-, \linecolor, line width=\linewidth, shorten <=\pointwidth, shorten >=\pointwidth] (31) -- (\p);
            }
        \end{tikzpicture}}
    \end{subfigure}\hfill
    \begin{subfigure}[t]{0.4\textwidth}
        \centering
        \resizebox{\linewidth}{!}{
        \begin{tikzpicture}
            \def \linewidth {.5pt}
            \def \pointwidth {2pt}
            \def \linecolor {black}
        
            \def \xA {2}
            \def \xB {4}
            \coordinate (s) at (0,0);
            \coordinate (t) at (6,0);
            
            \coordinate (A1) at (\xA,2);
            \coordinate (A2) at (\xA,1);
            \coordinate (A3) at (\xA,0);
            \coordinate (A4) at (\xA,-1);
            \coordinate (A5) at (\xA,-2);
        
            \node at ($(A1) + (0,-0.3)$) {\small $\hat{c}_1$};
            \node at ($(A2) + (0,-0.3)$) {\small $o_2$};
            \node at ($(A3) + (0,-0.3)$) {\small $o_4$};
            \node at ($(A4) + (0,-0.3)$) {\small $\hat{c}_3$};
            \node at ($(A5) + (0,-0.3)$) {\small $o_5$};
            
            \coordinate (B1) at (\xB,2);
            \coordinate (B2) at (\xB,1);
            \coordinate (B3) at (\xB,0);
            \coordinate (B4) at (\xB,-1);
            \coordinate (B5) at (\xB,-2);
        
            \node at ($(B1) + (0,-0.3)$) {\small $b_1$};
            \node at ($(B2) + (0,-0.3)$) {\small $b_3$};
            \node at ($(B3) + (0,-0.3)$) {\small $\hat{c}_2$};
            \node at ($(B4) + (0,-0.3)$) {\small $b_4$};
            \node at ($(B5) + (0,-0.3)$) {\small $b_5$};
            
            \foreach \p in {s,t}
            {
                \draw[fill=black, draw=black] (\p) circle (2pt) {};
            }
            \node at ($(s) + (0,0.3)$) {$s$};
            \node at ($(t) + (0,0.3)$) {$t$};
            \foreach \p in {A1,A2,A3,A4,A5}
            {
                \draw[fill=orange, draw=orange] (\p) circle (2pt) {};
                    \draw[-, \linecolor, line width=\linewidth, shorten <=\pointwidth, shorten >=\pointwidth] (s) -- (\p);
            }
            \node at ($(A1) + (0,0.3)$) {$\Gamma_1$};
            \foreach \p in {B1,B2,B3,B4,B5}
            {
                \draw[fill=blue, draw=blue] (\p) circle (2pt) {};
                    \draw[-, \linecolor, line width=\linewidth, shorten <=\pointwidth, shorten >=\pointwidth] (\p) -- (t);
            }
            \node at ($(B1) + (0,0.3)$) {$\Gamma_2$};
            
            \foreach \p in {A1,A2,A3}{
                \draw[-, \linecolor, line width=\linewidth, shorten <=\pointwidth, shorten >=\pointwidth] (\p) -- (B1);
                \draw[-, \linecolor, line width=\linewidth, shorten <=\pointwidth, shorten >=\pointwidth] (\p) -- (B2);
            }    
            \draw[-, \linecolor, line width=\linewidth, shorten <=\pointwidth, shorten >=\pointwidth] (A3) -- (B3);
            \foreach \p in {A4,A5}{
                \draw[-, \linecolor, line width=\linewidth, shorten <=\pointwidth, shorten >=\pointwidth] (\p) -- (B4);
                \draw[-, \linecolor, line width=\linewidth, shorten <=\pointwidth, shorten >=\pointwidth] (\p) -- (B5);
            } 
        \end{tikzpicture}}
    \end{subfigure}
    \caption{A fair $k$-min-sum-radii instance with two colors and equal proportions. Left: Output of \Cref{alg:centers-and-radii} as balls $B(\hat{c}_i,\hat{r}_i)$ for $i=1,2,3$ and the graph edges between any fair pair of points that have access to the same center $\hat{c}$. Right: The corresponding flow network. All edges have capacity 1.}
    \label{fig:enter-label}
\end{figure}

From the flow, we can construct the fair pairs by combining two points $p_1\in \Gamma_1$ and $p_2\in \Gamma_2$ if the edge connecting them carries flow. We assign such a pair to a center $\hat{c}$ to which both points have access. We can summarize this in the following observation:
\begin{observation}\label{obs:1-fairlets}
    Let $F = \{ p_1, p_2\}$ be a fair pair constructed as described above, let $\hat{c}$ be the center to which it gets assigned and $\hat{r}$ the corresponding radius. Then
    $\max_{i=1,2}d(p_i,\hat{c})\leq \hat{r}$.
\end{observation}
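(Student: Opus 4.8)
The plan is to note that this observation is simply an unfolding of how the flow network and the assignment step were defined, so almost nothing needs to be done. First I would recall the construction: an edge is placed between $p_1 \in \Gamma_1$ and $p_2 \in \Gamma_2$ in the flow network exactly when there is a candidate center $\hat{c} \in \hat{C}$ (with associated radius $\hat{r}$) such that $d(\hat{c},p_1) \le \hat{r}$ and $d(\hat{c},p_2) \le \hat{r}$, i.e.\ $p_1$ and $p_2$ both ``have access'' to $\hat{c}$. Since $F = \{p_1,p_2\}$ is a fair pair obtained from a flow of value $n/2$, the edge $\{p_1,p_2\}$ carries one unit of flow and in particular is present in the network; hence by the defining condition for edges there is at least one common center to which both $p_1$ and $p_2$ have access.

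Next I would invoke the assignment rule: a fair pair is, by definition, assigned to a center to which \emph{both} of its points have access. Therefore the center $\hat{c}$ picked for $F$, together with its radius $\hat{r}$, satisfies $d(p_1,\hat{c}) \le \hat{r}$ and $d(p_2,\hat{c}) \le \hat{r}$, and consequently $\max_{i=1,2} d(p_i,\hat{c}) \le \hat{r}$, which is exactly the claimed inequality.

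There is no real obstacle here; the only point worth a sentence of care is that the assignment step is never forced to pick a center that is inaccessible to one of the two points, and this is automatic because edges are inserted only when a jointly accessible center exists and the procedure is defined to choose such a center. Hence the observation holds by construction.
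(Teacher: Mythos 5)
Your argument is correct and is exactly what the paper intends: the observation follows immediately from the construction, since an edge $\{p_1,p_2\}$ exists (and can carry flow) only if some $\hat{c}\in\hat{C}$ satisfies $d(\hat{c},p_1)\le\hat{r}$ and $d(\hat{c},p_2)\le\hat{r}$, and the assignment rule picks such a jointly accessible center. The paper gives no separate proof because the statement is by construction, so your unfolding matches its reasoning.
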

In other words, assigning the fair pairs as a whole does not increase the cost.
Together with \Cref{lem:algo-apx}, this implies the following theorem.
\begin{theorem}\label{thm-1-1}
    Let $P = \Gamma_1 \cup \Gamma_2$ be a set of points consisting of only two different color groups $\Gamma_1$ and $\Gamma_2$ that fulfill $\lvert\Gamma_1\rvert = \lvert \Gamma_2\rvert$.
    For every $\eps > 0$, there exists an FPT $(3+\eps)$-approximation algorithm for the fair $k$-min-sum-radii problem.
\end{theorem}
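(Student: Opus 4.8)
The plan is to reuse the first stage of the algorithm behind \Cref{maintheorem} unchanged — iterate \textsc{Centers-and-radii} over all near-optimal radius profile guesses in $\mathscr{R}$ and all tuples $a\in\{1,\dots,k\}^k$ — and only replace the access-graph assignment step by a flow-based one. Fix the iteration in which the radius profile is near-optimal and \Cref{alg:centers-and-radii} guesses correctly; such an iteration exists by \Cref{cor:compute-radius-profile}. By \Cref{lem:injective-mapping}, the candidate balls $B(\hat c_1,\hat r_1),\dots,B(\hat c_k,\hat r_k)$ then satisfy that every optimal cluster $C_j^*$ is entirely contained in one of them. Since in the 1:1 case each optimal cluster contains equally many points of each color, we may split each $C_j^*$ into $|C_j^*|/2$ bicolored pairs; together these form a partition of $P$ into $n/2$ fair pairs, each pair lying inside a single optimal cluster and therefore inside a single candidate ball.

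Next I would build the flow network described before \Cref{obs:1-fairlets}: a source $s$ joined to every point of $\Gamma_1$, a sink $t$ joined to every point of $\Gamma_2$, and an edge $\{p_1,p_2\}$ for $p_1\in\Gamma_1$, $p_2\in\Gamma_2$ whenever some $\hat c\in\hat C$ has $d(\hat c,p_1)\le\hat r$ and $d(\hat c,p_2)\le\hat r$, all capacities being $1$. The key claim is that this network admits an integral $s$-$t$ flow of value $n/2$: the optimal fair pairing above uses only edges present in the network, because each of its pairs sits inside a common candidate ball and hence both its points have access to that ball's center; thus it realizes a perfect matching between $\Gamma_1$ and $\Gamma_2$. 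Since all capacities are integral, a maximum flow — computable in polynomial time — has value at least $n/2$ and may be taken integral, so it corresponds to a perfect matching, i.e.\ to a partition of $P$ into $n/2$ fair pairs, each equipped with a designated accessible center.

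Then I would form the clustering by assigning, for every fair pair returned by the flow, both of its points to a center $\hat c\in\hat C$ to which both have access. This yields at most $k$ clusters (one per candidate center that receives a pair), and each cluster is a union of fair $1{:}1$ pairs and is therefore itself $1{:}1$-fair, which is exactly the mergeability of the fairness constraint. For the cost, a point assigned to $\hat c_i$ lies within distance $\hat r_i$ of it by \Cref{obs:1-fairlets}, so the radius of the $i$-th cluster is at most $\hat r_i$ and the total cost is at most $\sum_{i=1}^k \hat r_i \le 3(1+\eps)\sum_{i=1}^k r_i^* = 3(1+\eps)\OPT$ by \Cref{lem:algo-apx}. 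Rescaling $\eps$ gives the claimed bound, and since the algorithm tries all guesses and the correct one occurs, the best feasible solution it returns is at least this good. The running time is that of \Cref{maintheorem} with one additional polynomial-time max-flow computation per guess, so it stays $O((k\log_{1+\eps}(k/\eps))^k\cdot\poly(n))$ and is FPT in $k$.

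The only nontrivial step is establishing the existence of a flow of value $n/2$ under a correct guess; everything else is bookkeeping. This hinges on the containment guarantee of \Cref{lem:injective-mapping} combined with the fact that in the 1:1 case an optimal solution decomposes into bicolored pairs each confined to one optimal cluster — once that is in place, integrality of flows does the rest.
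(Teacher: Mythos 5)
Your proposal is correct and follows essentially the same route as the paper: run \textsc{Centers-and-radii} over all guesses, build the bipartite flow network on pairs with access to a common candidate center, extract a perfect matching into fair pairs, and assign each pair to a shared center, bounding the cost by $\sum_i \hat r_i \le 3(1+\eps)\sum_i r_i^*$ via \Cref{lem:algo-apx} and \Cref{obs:1-fairlets}. Your explicit justification that a flow of value $n/2$ exists (decomposing each optimal cluster into bicolored pairs and invoking \Cref{lem:injective-mapping}) is exactly the argument the paper leaves implicit.
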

\begin{proof}
    We run Algorithm~\ref{alg:centers-and-radii} to obtain a set of balls $B(\hat{c},\hat{r})$ and then compute a partitioning of $P$ into fair pairs as described above. 
    We assign each fair pair to a center to which both points of the pair have access. 
    By \Cref{lem:algo-apx}, $ \sum_{i=1}^k \hat{r}_i \leq 3(1+\eps)\sum_{i=1}^k r^*_i $. As noted in \Cref{obs:1-fairlets}, assigning fair pairs does not increase the cost, which concludes the proof.
\end{proof}

\subsubsection{Uniform Lower Bounds} \label{sec:lower-bounds}
In $k$-min-sum-radii with uniform lower bounds, we have an additional input number $\ell \in \mathbb{N}$, and every cluster in the solution needs to contain at least $\ell$ points. In this case we set up a network flow between the centers $\hat{C}$ on the left and the points on the right. There is a super source $s$ and an edge $(s,\hat{c})$ for every $\hat{c} \in \hat{C}$. The capacity of these edges is set to the lower bound $L$. Then every $\hat{c}$ is connected to all $x \in P$ with $d(\hat{c},x) \le \hat{r}$. Finally, all points are connected with a unit capacity edge $(x,t)$ to a super sink $t$. Any flow in this network corresponds to an assignment of at least $L$ points to each center. 
Since our balls cover the optimum solution, we know that there exists a feasible flow in this network that sends $k \cdot L $ units of flow to the super sink. We can thus run a maximum flow algorithm to find such an assignment. After that, any remaining point $x \in P$ can be assigned arbitrarily to a center $\hat{c}$ with $d(x,\hat{c})\le \hat{r}$. Again, this is possible due to \Cref{lem:injective-mapping}.

\begin{theorem}\label{cor-lower}
    There exists an FPT $(3+\eps)$-approximation algorithm for the $k$-min-sum-radii problem with uniform lower bounds.
\end{theorem}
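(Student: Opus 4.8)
The plan is to reuse the candidate balls produced by \Cref{alg:centers-and-radii} verbatim and only swap out the assignment step, exploiting that the lower bound constraint — although mergeable — can in fact be enforced \emph{directly} by a single maximum flow computation, so that no ball ever has to be enlarged. Concretely, I would run \textsc{Centers-and-radii} for every radius profile guess in $\mathscr{R}$ and every tuple $a\in\{1,\dots,k\}^{k}$, exactly as in \Cref{alg:k-MSR-with-constraints}. In the iteration where the radius profile is near-optimal and all guesses are correct, \Cref{lem:injective-mapping} tells us that every optimal cluster $C_j^*$ sits inside some candidate ball $B(\hat c_\ell,\hat r_\ell)$ with $\hat r_\ell>0$, and conversely that each positive-radius ball contains at least one whole optimal cluster; moreover \Cref{lem:algo-apx} gives $\sum_{j=1}^k \hat r_j \le 3(1+\eps)\sum_{j=1}^k r_j^*$.

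Next I would set up the flow network sketched before the theorem: a source $s$ with an edge of capacity $L$ to every center $\hat c$ with $\hat r>0$, an edge $(\hat c,x)$ of capacity $1$ whenever $d(\hat c,x)\le \hat r$, and a unit-capacity edge $(x,t)$ to the sink $t$. The crucial claim is that this network admits an integral flow saturating all source edges. To see this, pick for each positive-radius ball $B(\hat c_\ell,\hat r_\ell)$ a distinct optimal cluster $C^*_{j(\ell)}\subseteq B(\hat c_\ell,\hat r_\ell)$ — possible since the map $\varphi$ of \Cref{lem:injective-mapping} is surjective onto the positive-radius balls and its domain has size $k$, so one may select pairwise different preimages. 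These clusters are pairwise disjoint (clusters of a clustering) and each has at least $L$ points (lower bound feasibility of the optimum), so routing $L$ arbitrary points of $C^*_{j(\ell)}$ through $\hat c_\ell$ gives a feasible integral flow whose value equals the total source capacity. An integral maximum flow therefore assigns at least $L$ points to every center we use.

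Finally, any point not picked up by the flow lies, by the first part of \Cref{lem:injective-mapping} (the optimal clusters partition $P$), in some ball $B(\hat c_\ell,\hat r_\ell)$, so I assign it to that center. Every point is then within distance $\hat r$ of its assigned center, hence the cluster around $\hat c_\ell$ has radius at most $\hat r_\ell$, and the total cost is at most $\sum_{\ell:\,\hat r_\ell>0}\hat r_\ell \le 3(1+\eps)\sum_{j=1}^k r_j^*$. Every cluster has at least $L$ points, so the solution is feasible; iterating over all guesses and keeping the cheapest feasible solution, the correct-guess iteration certifies the $(3+\eps)$-ratio. The running time is that of \Cref{maintheorem} with the \textsc{Assignment} call replaced by a polynomial-time max-flow computation, so it stays $O\!\left((k\log_{1+\eps}(k/\eps))^k\cdot\poly(n)\right)$.

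I expect the main obstacle to be the bookkeeping around zero-radius balls: the lower bound must be imposed only on centers that \Cref{lem:injective-mapping} guarantees host an entire optimal cluster, since otherwise the flow need not be feasible; and one must check that, unlike the merging step in the proof of \Cref{maintheorem}, the flow-based assignment never pushes a point beyond the already-computed radius $\hat r_\ell$ — this is exactly what removes the factor of two and yields $3+\eps$ rather than $6-\tfrac3k+\eps$.
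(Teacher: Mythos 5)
Your proposal is correct and follows essentially the same route as the paper: a max-flow network with capacity-$L$ source edges to the candidate centers, ball-membership edges to the points, feasibility certified by the disjoint optimal clusters contained in the balls (\Cref{lem:injective-mapping}), leftover points assigned within covering balls, and the cost bounded via \Cref{lem:algo-apx}. Your extra care in imposing the lower bound only on positive-radius centers is a sensible sharpening of a detail the paper's sketch glosses over, but it does not change the argument.
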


\bibliography{main}

\newpage
\appendix

\section{Guessing the Radius Profile}\label{sec:guessing-radii}

\newcommand{\Center}{C}
\newcommand{\radiusCenter}{r^{\Center, \alpha}}
\newcommand{\radiusCenteropt}{r^{\Center, *}}
\newcommand{\radiusMSR}{r^{\MSR, \beta}_{\max}}
\newcommand{\radiusMSRopt}{r^{\MSR, *}}
Let $S^* = (\mathscr{C}^*, \sigma^*)$ be an optimal $k$-min-sum-radii solution with radius profile $(r^*_1, \ldots, r^*_k)$. We now show how to derive a radius profile $(\tilde{r}_1, \ldots, \tilde{r}_k)$ from an approximate $k$-center solution, with the property that $r^*_i \leq \tilde{r}_i \leq (1+\eps)r^*_i$ for $1 \leq i \leq k$. We start with the following observation.
\begin{lemma} \label{lem:largest-radius}
    Let $F_0$ be the value of a $\beta$-approximate solution for $k$-center with mergeable constraints and $r^*_1$ the largest radius of an optimal $k$-min-sum-radii solution for the same instance. Then it holds that
    $r^*_1 \in \left[\frac{F_0}{\beta}, k F_0 \right]$.
\end{lemma}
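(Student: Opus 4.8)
The plan is to prove the two bounds separately, in each case turning an optimal solution of one of the two clustering problems into a feasible solution of the other. Throughout, let $F^*$ denote the value of an \emph{optimal} solution for $k$-center with the mergeable constraint, so that $F^* \le F_0 \le \beta F^*$, and recall that the radii of the optimal $k$-min-sum-radii solution are sorted so that $r_1^* = \max_i r_i^*$. The key structural observation I would use is that the mergeable (e.g.\ fairness) constraint only restricts which partitions of $P$ are admissible; it says nothing about the objective. Hence one and the same pair (set of $\le k$ centers together with the induced partition) is simultaneously a feasible solution for the constrained $k$-center problem and for the constrained $k$-min-sum-radii problem, and its $k$-center cost is the \emph{maximum} of the cluster radii while its $k$-min-sum-radii cost is their \emph{sum}. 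If a solution uses fewer than $k$ clusters I would pad it with trivial radius-$0$ clusters, which changes neither objective.

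\textbf{Lower bound $r_1^* \ge F_0/\beta$.} I would view the fixed optimal $k$-min-sum-radii solution $(\mathscr{C}^*,\sigma^*)$ as a feasible solution for the constrained $k$-center problem. Its $k$-center cost equals $\max_i r_i^* = r_1^*$, so the optimal constrained $k$-center value satisfies $F^* \le r_1^*$. Since $F_0 \le \beta F^*$, this gives $F_0/\beta \le F^* \le r_1^*$.

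\textbf{Upper bound $r_1^* \le k F_0$.} I would take the given $\beta$-approximate $k$-center solution, which partitions $P$ into (at most) $k$ clusters satisfying the constraint, each of radius at most $F_0$ around its center, and view it as a feasible solution for the constrained $k$-min-sum-radii problem. Its $k$-min-sum-radii cost is at most $k F_0$, so the optimal $k$-min-sum-radii value is at most $k F_0$; since $r_1^* \le \sum_{i=1}^k r_i^*$ and the latter is exactly that optimal value, we obtain $r_1^* \le k F_0$.

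\textbf{Main obstacle.} There is no serious technical obstacle here; the argument is a two-line reduction in each direction. The only things to be careful about are using the optimality and approximation relations in the correct direction (a feasible constrained $k$-center solution has cost at least $F^*$, and the $\beta$-approximation guarantee gives $F_0 \le \beta F^*$, not the reverse), and checking that ``feasibility'' genuinely transfers between the two constrained problems, which is precisely the point that the constraint is imposed on the partition and not on the cost function.
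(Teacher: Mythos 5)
Your proposal is correct and follows essentially the same argument as the paper: the lower bound by viewing the optimal $k$-MSR clustering as a feasible constrained $k$-center solution (so $F_0/\beta \le F^* \le r_1^*$), and the upper bound by viewing the $\beta$-approximate $k$-center solution as a feasible constrained $k$-MSR solution of cost at most $kF_0$, which bounds $r_1^*$ via the optimal sum of radii. No gaps to report.
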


\begin{proof}
    First note that any feasible solution for $k$-center with mergeable constraints is also a feasible solution for $k$-min-sum-radii with mergeable constraints. Since $F_0$ is the largest radius in the $k$-center solution, the sum of the radii in that solution is at most $k F_0$. Then we must have $r_1^* \leq kF_0$, as otherwise the $k$-center solution would yield a better $k$-min-sum-radii solution than the optimal $k$-min-sum-radii-solution itself.
    For the lower bound, consider an \emph{optimal} $k$-center solution value $F^*$. Since $F_0$ is an $\beta$-approximation, we have $F_0 \leq \beta F^*$. Additionally, we must have $r^*_1 \geq F^*$, as otherwise the $k$-min-sum-radii solution would be a better solution to $k$-center. Combining these two insights we get $r^*_1 \geq F^*\geq F_0/\beta$.
\end{proof}
Now that we have an interval that is certain to contain the value of the largest radius, we can construct a set of size $O(\log k)$ that contains a $(1+\eps)$-approximation to that value.
\begin{lemma} \label{lem:largest-radius-guess}
    Let $F_0$ be the value of a $\beta$-approximate solution for $k$-center with mergeable constraints and $r_1^*$ the largest radius of an optimal $k$-min-sum-radii solution for the same instance. Let $\eps > 0$. Then the set $R = \{(1+\eps)^j \frac{F_0}{\beta} \mid j \leq \ceil{\log_{1+\eps}(\beta k)}\}$ contains a value $\tilde{r}_1$ such that $r_1^* \leq \tilde{r}_1 \leq (1+\eps)r_1^*$.
\end{lemma}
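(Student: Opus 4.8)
The plan is to let $\tilde r_1$ be the smallest integer power of $(1+\eps)$ times $F_0/\beta$ that is at least $r_1^*$, and then to check two things: that this choice overshoots $r_1^*$ by at most a factor $(1+\eps)$, and that the exponent needed stays below $\ceil{\log_{1+\eps}(\beta k)}$, so that $\tilde r_1$ really lies in $R$. The only real input is the two-sided bound on $r_1^*$ supplied by \Cref{lem:largest-radius}.

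First I would invoke \Cref{lem:largest-radius} to get $r_1^* \in [F_0/\beta,\, k F_0]$. Assuming $F_0>0$ (if $F_0=0$ then $r_1^*=0$ by the same lemma and every element of $R$, all equal to $0$, works), the lower bound $r_1^*\ge F_0/\beta$ shows that the set of nonnegative integers $j$ with $(1+\eps)^j \tfrac{F_0}{\beta}\ge r_1^*$ is nonempty; let $j^*$ be its minimum and set $\tilde r_1 := (1+\eps)^{j^*}\tfrac{F_0}{\beta}$. By construction $r_1^*\le \tilde r_1$. For the upper bound, if $j^*=0$ then $\tilde r_1=F_0/\beta\le r_1^*\le (1+\eps)r_1^*$; if $j^*\ge 1$ then minimality gives $(1+\eps)^{j^*-1}\tfrac{F_0}{\beta}<r_1^*$, hence $\tilde r_1=(1+\eps)\cdot(1+\eps)^{j^*-1}\tfrac{F_0}{\beta}<(1+\eps)r_1^*$.

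It remains to bound $j^*$. Using $r_1^*\le kF_0$, the inequality $(1+\eps)^j \tfrac{F_0}{\beta}\ge kF_0\ge r_1^*$ holds as soon as $(1+\eps)^j\ge \beta k$, i.e.\ as soon as $j\ge \log_{1+\eps}(\beta k)$; in particular $j=\ceil{\log_{1+\eps}(\beta k)}$ satisfies the defining inequality, so $j^*\le \ceil{\log_{1+\eps}(\beta k)}$ and therefore $\tilde r_1\in R$. I do not anticipate a genuine obstacle here: the argument is a one-line geometric-grid computation, and the only thing to be mildly careful about is the degenerate case $F_0=0$, handled as above.
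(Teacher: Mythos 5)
Your proof is correct and follows essentially the same route as the paper: both invoke \Cref{lem:largest-radius} to confine $r_1^*$ to $[F_0/\beta,\,kF_0]$ and then take the smallest point of the geometric $(1+\eps)$-grid above $r_1^*$ (equivalently, the right endpoint of the covering subinterval), checking that its exponent does not exceed $\ceil{\log_{1+\eps}(\beta k)}$. Your explicit handling of the exponent bound and of the degenerate case $F_0=0$ is just slightly more careful bookkeeping than the paper's interval-covering phrasing, not a different argument.
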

\begin{proof}
    By \Cref{lem:largest-radius} we have $r_1^* \in I = \left[ \frac{F_0}{\beta}, kF_0 \right]$. We cover this interval with smaller intervals $I_j = \left[(1+\eps)^{j-1} \frac{F_0}{\beta}, (1+\eps)^{j} \frac{F_0}{\beta}\right]$ for $j\in \{1, \ldots, \ceil{\log_{1+\eps}(\beta k)})\}$. Note that the union of the $I_j$ contains $I$. This means that any number $r^*_k\in I$ must lie in one of these intervals, say $I_\ell = \left[(1+\eps)^{\ell-1} \frac{F_0}{\beta}, (1+\eps)^{\ell} \frac{F_0}{\beta} \right]$. Choosing the right endpoint $\tilde{r}_1 = (1+\eps)^\ell \frac{F_0}{\beta}$ gives us a number with the desired property. Since the set $R$ consists exactly of the endpoints of all the $I_j$, the claim follows.
 
\end{proof}
Once a guess for the largest radius is fixed, we can argue how to (approximately) guess the remaining radii. To this end, we assume that all radii are at least $(\eps/k)r_1^*$. However, this is not a very limiting assumption, as rounding up all smaller radii to that value can increase the cost of any solution $S$ by at most $\eps r_1^* \leq \eps \msr(S)$.\\
The following lemma implies that only $O(\log\frac{k}{\eps})$ guesses must be made to find a good approximation for any smaller radius of the $k$-min-sum-radii solution.

\begin{lemma} \label{lem:guess-radii}
    Let $r_1^*$ be the largest radius of an optimal $k$-min-sum-radii solution.
    If $r_j^* \geq \frac{\eps}{k}r_1^*$ for all $1\leq j \leq k-1$, then the set $R' = \{(1+\eps)^j\frac{\eps}{k}r_1^* \mid j\leq \ceil{\log_{1+\eps}(\frac{k}{\eps})}\}$ contains values $\tilde{r}_{2}, \ldots, \tilde{r}_{k}$ such that $r_j^* \leq \tilde{r}_j \leq (1+\eps)r_j^*$ for all $2\leq j \leq k$.
\end{lemma}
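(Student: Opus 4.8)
The plan is to mimic the proof of \Cref{lem:largest-radius-guess}, this time covering the interval in which each \emph{smaller} optimal radius must lie. First I would fix an index $j\in\{2,\ldots,k\}$. Since $r_1^*$ is the largest radius of the optimal $k$-min-sum-radii solution we have $r_j^*\le r_1^*$, and by hypothesis (together with the preprocessing step discussed just before the lemma, which rounds every tiny radius up to $\tfrac{\eps}{k}r_1^*$) we have $r_j^*\ge \tfrac{\eps}{k}r_1^*$. Hence $r_j^*\in\left[\tfrac{\eps}{k}r_1^*,\,r_1^*\right]$, and it suffices to show that $R'$ contains a $(1+\eps)$-approximation of every value in this interval.

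Next I would cover the interval by the geometric pieces $I_t=\left[(1+\eps)^{t-1}\tfrac{\eps}{k}r_1^*,\,(1+\eps)^{t}\tfrac{\eps}{k}r_1^*\right]$ for $t=1,\ldots,T$, where $T=\ceil{\log_{1+\eps}(k/\eps)}$. The left endpoint of $I_1$ is $\tfrac{\eps}{k}r_1^*$, and the right endpoint of $I_T$ satisfies $(1+\eps)^{T}\tfrac{\eps}{k}r_1^*\ge \tfrac{k}{\eps}\cdot\tfrac{\eps}{k}r_1^* = r_1^*$ by the choice of $T$, so $\bigcup_{t=1}^{T} I_t\supseteq\left[\tfrac{\eps}{k}r_1^*,\,r_1^*\right]$. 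In particular $r_j^*$ lies in some piece $I_{t_j}$.

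Then I would set $\tilde r_j$ to be the right endpoint $(1+\eps)^{t_j}\tfrac{\eps}{k}r_1^*$ of that piece. On the one hand $\tilde r_j\ge r_j^*$ since $r_j^*\in I_{t_j}$; on the other hand, the left endpoint of $I_{t_j}$ is at most $r_j^*$, so $\tilde r_j=(1+\eps)\cdot(1+\eps)^{t_j-1}\tfrac{\eps}{k}r_1^*\le (1+\eps)r_j^*$. Every such right endpoint has the form $(1+\eps)^{t}\tfrac{\eps}{k}r_1^*$ with $1\le t\le T$ and therefore belongs to $R'$, which yields the desired values $\tilde r_2,\ldots,\tilde r_k$.

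The argument is essentially routine, and the only point that requires genuine care is the off-by-one count: verifying that $\ceil{\log_{1+\eps}(k/\eps)}$ geometric pieces really do suffice to reach $r_1^*$ starting from $\tfrac{\eps}{k}r_1^*$, and checking that the hypothesis of the lemma (stated only for $j\le k-1$) together with the rounding preprocessing indeed guarantees $r_j^*\ge\tfrac{\eps}{k}r_1^*$ for every index $j$ that we need to handle. Everything else is bookkeeping analogous to \Cref{lem:largest-radius-guess}.
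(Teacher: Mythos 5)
Your proposal is correct and follows essentially the same route as the paper: both cover the interval $\left[\tfrac{\eps}{k}r_1^*,\, r_1^*\right]$ by geometric subintervals with ratio $(1+\eps)$ and pick the right endpoint of the piece containing each $r_j^*$, exactly as in \Cref{lem:largest-radius-guess}. Your explicit check that $\ceil{\log_{1+\eps}(k/\eps)}$ pieces reach $r_1^*$ is just a more detailed write-up of the paper's one-line argument.
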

\begin{proof}
    We employ the same technique as in \Cref{lem:largest-radius-guess}. Since all remaining radii $\tilde{r}_{2}, \ldots, \tilde{r}_k$ must fall into the interval $[(\eps/k)r^*_1, r^*_1]$, we can again cover this by smaller intervals. The set $R'$ is the set of right endpoints of these intervals and by the same arguments as in the proof for \Cref{lem:largest-radius-guess}, for each $r_j^*$ we get a number $\tilde{r}_j\in R'$ with $r_j^* \leq \tilde{r}_j \leq (1+\eps)r_j^*$.
\end{proof}
Combining the above statements, we can prove the corollary from the main body of the paper.

\computeradiusprofile*
\begin{proof}
    By \Cref{lem:largest-radius-guess}, there are $O(\log_{1+\eps} k)$ candidate values for $\tilde{r}_1$ such that one of them is a $(1+\eps)$-approximation for $r^*_1$. By \Cref{lem:guess-radii}, for each of these candidates, there are $O(\log_{1+\eps} (k/\eps))$ candidates for each of the remaining $k-1$ radii. So in total we have $O\left(\log_{1+\eps}k \cdot \log_{1+\eps}(k/\eps)^{k-1}\right) = O\left(\log_{1+\eps}(k/\eps)^k\right)$ total radius profiles, and one of these is sure to be near-optimal. Since each radius profile consists of $k$ numbers, pre-computing all possible profiles takes at most $O\left(k \log_{1+\eps} (k/\eps)^k\right)$
\end{proof}

\section{\texorpdfstring{$k$-Center with Mergeable Constraints}{k-Center with Mergeable Constraints}}\label{fairness-notions}
For guessing the radius profiles for the constrained $k$-min-sum-radii problem as described in \Cref{sec:guessing-radii}, we need constant-factor approximations for the respective constrained $k$-center problem. In the following, we list some mergeable constraints with the approximation factors that are known for them.

\paragraph*{Exact Fairness}
Here, we refer to the fairness definition already stated in \Cref{sec:prelims}.
R\"osner and Schmidt \cite{rosner2018privacy} give a 12-approximation for the case of multiple colors and no restriction on initial ratios.
Bercea et al.\ improve it to a 5-approximation for the most general case\cite{bercea2019cost}.
For 1:1 case Chierichetti et al. give a $3$-approximation \cite{CKLV17} (if the matching in that algorithm is set up correctly). 

Exact fairness is a mergeable constraint: Let $(\mathscr{C},\sigma)$ be a clustering that is exactly fair. Consider a pair of clusters $C_i \coloneqq \sigma^{-1}(c_i)$, $C_j \coloneqq \sigma^{-1}(c_j)$. As $C_i\cap C_j = \emptyset$,
\[ \frac{|(C_i\cup C_j) \cap \Gamma_{\ell}|}{|C_i\cup C_j|} = \frac{|C_i \cap \Gamma_{\ell}| + |C_j \cap \Gamma_{\ell}|}{|C_i|+|C_j|} = \frac{|\Gamma_{\ell}|}{|P|},\]
where the last equality is implied by the following observation:
If $\frac{a}{b} = \frac{c}{d}$ for some $a,b,c,d\in \mathbb{R}^+$, then adding up enumerators and denominators maintains this ratio: $\frac{a+c}{b+d} = \frac{a}{b}$.  

\paragraph*{Ratio Balance}
Let $b \in [0,1]$. A clustering with only two colors $\Gamma_1, \Gamma_2$ is said to be $b$-\textit{balanced} if \[ \min \left\{\frac{\lvert \Gamma_1\cap C \rvert}{\lvert \Gamma_2\cap C \rvert}, \frac{\lvert \Gamma_2\cap C \rvert}{\lvert \Gamma_1\cap C \rvert}\right\} \geq b \]
for all clusters $C$.
This is the model initially studied by 
Chierichetti et al. They give a 4-approximation for two colors for $\frac{1}{t}$-balance for $t\in \mathbb{N}$ and a 3-approximation if $t=1$ \cite{CKLV17}.

When merging two clusters, the resulting balance is at least as high as the balance of any of the individual clusters. Therefore, ratio balance is a mergeable constraint.

\paragraph*{Exact Balance}
Exact balance generalizes the balance notion to more than two colors. However, it is stricter in the sense that it requires the number of points of different colors to be the same in every cluster, i.e.\
\[ \lvert C \cap \Gamma_1 \rvert = \lvert C \cap \Gamma_2 \rvert = \ldots = \lvert C \cap \Gamma_m \rvert \]
for all clusters $C$ and colors $\Gamma_1, \ldots, \Gamma_m$.
B\"ohm et al.\ give a 4-approximation for this problem \cite{bohm2020fair}.

It is easy to see that exact balance is a mergeable constraint since if $a_1=b_1$ and $a_2=b_2$, then $a_1 + a_2 = b_1 + b_2$.

\paragraph*{\texorpdfstring{$(\ell,u)$-fairness}{(l,u)-fairness}}
Bercea et al.\ \cite{bercea2019cost} give a different fairness notion that works for $m\geq 2$ colors $\Gamma_1,\ldots, \Gamma_m$. 
Let $\ell = (\ell_1, \ell_2, \ldots, \ell_m)$ and $u = (u_1, u_2, \ldots, u_m)$ for $\ell_1,\ldots,\ell_m,u_1,\ldots,u_m \in \mathbb{Q}$. 
A clustering is said to be $(\ell,u)$-fair if 
\[ \ell_i \leq \frac{\lvert C \cap \Gamma_i \rvert}{\lvert C \rvert} \leq u_i \]
for all clusters $C$ and all $i\leq m$.
The exact balance notion described earlier is the special case in which $\ell_i = u_i = \frac{\lvert P \cap \Gamma_i \rvert}{\lvert P \rvert}$ for all $i\leq m$.
Bercea et al.\ \cite{bercea2019cost} give a bicriteria approximation for this case which is $3$-approximate but violates the fairness constraints by at most $1$ point per cluster. Similarly, Bera et al.\ show how to obtain a 4-approximation if fairness constraints can be violated additively \cite{bera2019fair}.
To see that $(\ell,u)$-fairness is a mergeable constraint, let $C_1,C_2$ be two disjoint clusters that fulfill the constraint.
That is, $l_i\lvert C_1\rvert \leq \lvert C_1\cap \Gamma_i \rvert \leq u_i\lvert C_1\rvert$ and $l_i\lvert C_2\rvert \leq \lvert C_2\cap \Gamma_i \rvert \leq u_i\lvert C_2\rvert$.
Then,
\[ l_i \lvert C_1 \cup C_2 \rvert = l_i\lvert C_1 \rvert + l_i\lvert C_2 \rvert \leq \lvert C_1\cap \Gamma_i \rvert + \lvert C_2\cap \Gamma_i \rvert = \lvert (C_1\cup C_2)\cap \Gamma_i \rvert \]
and
\[ 
\lvert (C_1\cup C_2)\cap \Gamma_i \rvert = \lvert C_1\cap \Gamma_i \rvert + \lvert C_2\cap \Gamma_i \rvert \leq u_i\lvert C_1 \rvert + u_i\lvert C_2 \rvert = u_i\lvert C_1 \cup C_2 \rvert
\]
for all $i\leq m$.

\section{\texorpdfstring{An Example Run of \Cref{alg:centers-and-radii}}{An example run of Algorithm 1}} \label{sec:example-run}
\Cref{fig:example-run} shows an example run of \Cref{alg:centers-and-radii} specified in \Cref{sec:initial-selection-of-centers-and-radii}.
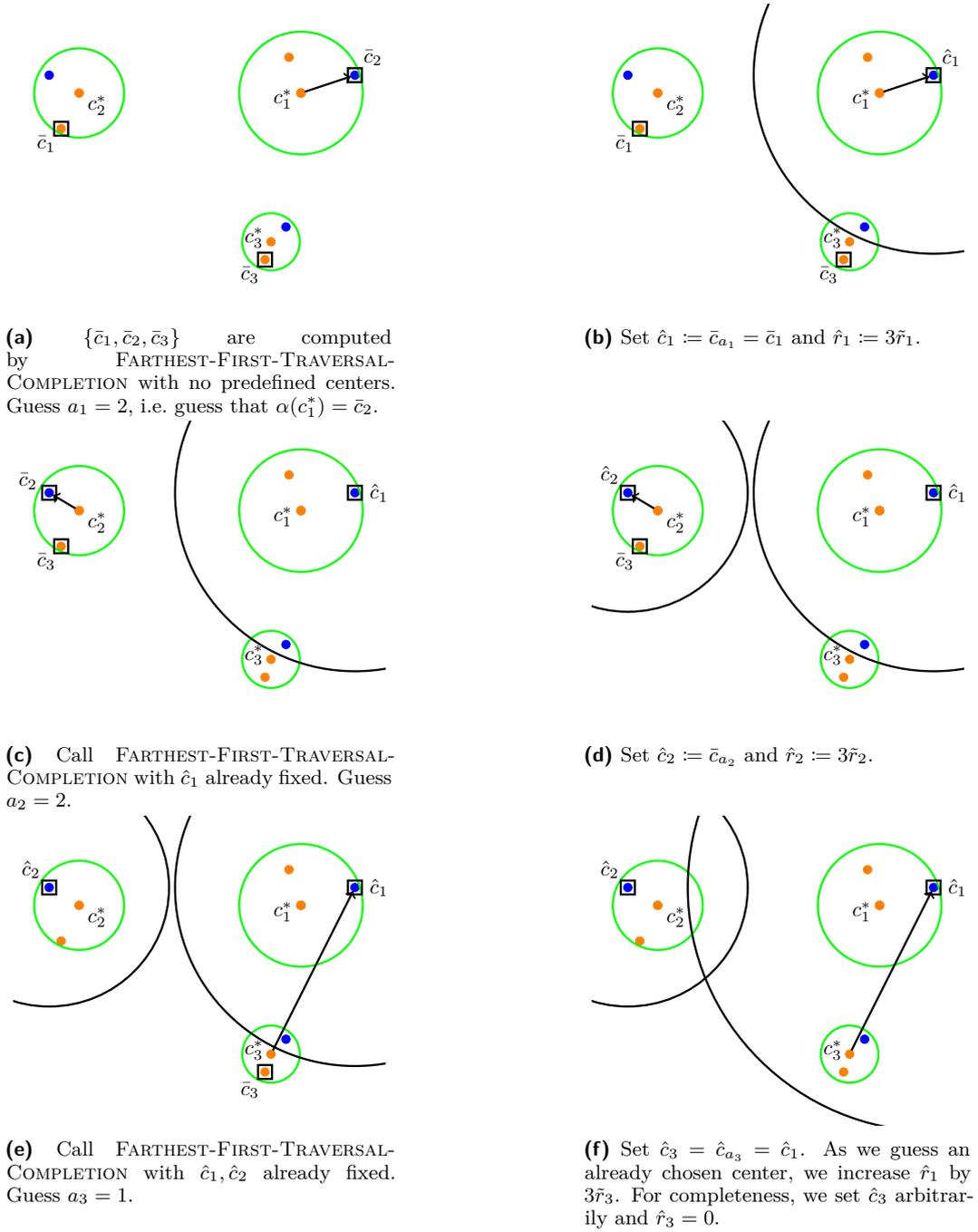
\begin{figure}[t]
\centering
\begin{subfigure}[t]{0.40\textwidth}
\centering
\resizebox{\linewidth}{!}{
\begin{tikzpicture}
    \clip (-1.8,-3.7) rectangle (4.4,1.5);
    \def \linecolor  {black}
    \def \pointwidth  {2pt}
    \def \centerwidth  {2pt}
    \def \linewidth  {1pt}
    \def \step {1}
    
    \coordinate (1) at (3,0);
    \coordinate (11) at ($(1) + (-.2,.6)$);
    \coordinate (12) at ($(1) + (.9,.3)$);

    \coordinate (2) at (-.7,0);
    \coordinate (21) at ($(2) + (-.3,-.6)$);
    \coordinate (22) at ($(2) + (-.5,.3)$);
    
    \coordinate (3) at (2.5,-2.5);
    \coordinate (31) at ($(3) + (-.1,-.3)$);
    \coordinate (32) at ($(3) + (.25, .25)$);
    
    \foreach \p in {2,21,1,11,3,31}
    {
        \draw[fill=orange, draw=orange] (\p) circle (2pt) {};
    }
    \foreach \p in {22,12,32}
    {
        \draw[fill=blue, draw=blue] (\p) circle (2pt) {};
    }

    \draw[draw=green, line width=\linewidth] (2) circle (.75) {};
    \draw[fill=orange, draw=orange] (2) circle (2pt) {};
    \draw[draw=green, line width=\linewidth] (1) circle (1.03) {};

    \node at ($(2) + (0.3,-0.2)$) {$c_2^*$};
    \node at ($(1) + (-.3,-.1)$) {$c_1^*$};
    \node at ($(3) + (-0.28,0.07)$) {$c_3^*$};
    
    \draw[fill=orange, draw=orange] (1) circle (2pt) {};
    \draw[draw=green, line width=\linewidth] (3) circle (.48) {};
    \draw[fill=orange, draw=orange] (1) circle (2pt) {};

    \foreach \p in {12,21,31}
    {
        \draw[draw=black, line width=1pt] (\p) node[rectangle, minimum height=.2cm,minimum width=.2cm,draw] (3.5pt) {};
    }
    
    \node at ($(21) + (-0.25,-0.25)$) {$\bar{c}_1$};
    \node at ($(12) + (.3,.3)$) {$\bar{c}_2$};
    \node at ($(31) + (-0.25,-0.25)$) {$\bar{c}_3$};
    
    \draw[->, \linecolor, line width=\linewidth, shorten <=\pointwidth, shorten >=\centerwidth] (1) -- (12);
\end{tikzpicture}
}
\caption{$\{\bar{c}_1,\bar{c}_2,\bar{c}_3\}$ are computed by \textsc{Farthest-First-Traversal-Completion} with no predefined centers. Guess $a_1 = 2$, i.e.\ guess that $\alpha(c_1^*)=\bar{c}_2$.}
\end{subfigure}\hfill
\begin{subfigure}[t]{0.40\textwidth}
\centering
\resizebox{\linewidth}{!}{
\begin{tikzpicture}
    \clip (-1.8,-3.7) rectangle (4.4,1.5);
    \def \linecolor  {black}
    \def \pointwidth  {2pt}
    \def \centerwidth  {2pt}
    \def \linewidth  {1pt}
    \def \step {1}
    
    \coordinate (1) at (3,0);
    \coordinate (11) at ($(1) + (-.2,.6)$);
    \coordinate (12) at ($(1) + (.9,.3)$);

    \coordinate (2) at (-.7,0);
    \coordinate (21) at ($(2) + (-.3,-.6)$);
    \coordinate (22) at ($(2) + (-.5,.3)$);
    
    \coordinate (3) at (2.5,-2.5);
    \coordinate (31) at ($(3) + (-.1,-.3)$);
    \coordinate (32) at ($(3) + (.25, .25)$);
    
    \foreach \p in {2,21,1,11,3,31}
    {
        \draw[fill=orange, draw=orange] (\p) circle (2pt) {};
    }
    \foreach \p in {22,12,32}
    {
        \draw[fill=blue, draw=blue] (\p) circle (2pt) {};
    }
    \node at ($(12) + (.3,.3)$) {$\hat{c}_1$};
    \draw[draw=green, line width=\linewidth] (2) circle (.75) {};
    \draw[fill=orange, draw=orange] (2) circle (2pt) {};
    \draw[draw=green, line width=\linewidth] (1) circle (1.03) {};

    \node at ($(2) + (0.3,-0.2)$) {$c_2^*$};
    \node at ($(1) + (-.3,-.1)$) {$c_1^*$};
    \node at ($(3) + (-0.28,0.07)$) {$c_3^*$};
    
    \draw[fill=orange, draw=orange] (1) circle (2pt) {};
    \draw[draw=green, line width=\linewidth] (3) circle (.48) {};
    \draw[fill=orange, draw=orange] (1) circle (2pt) {};

    \foreach \p in {12,21,31}
    {
        \draw[draw=black, line width=1pt] (\p) node[rectangle, minimum height=.2cm,minimum width=.2cm,draw] (3.5pt) {};
    }
    \node at ($(21) + (-0.25,-0.25)$) {$\bar{c}_1$};
    \node at ($(31) + (-0.25,-0.25)$) {$\bar{c}_3$};
    
    \draw[->, \linecolor, line width=\linewidth, shorten <=\pointwidth, shorten >=\centerwidth] (1) -- (12);
    \draw[draw=black, line width=\linewidth] (12) circle (3) {};
\end{tikzpicture}
}
\caption{Set $\hat{c}_1 \coloneqq \bar{c}_{a_1} = \bar{c}_1$ and $\hat{r}_1 \coloneqq 3\tilde{r}_1$.}
\end{subfigure}\hfill\\
\begin{subfigure}[t]{0.40\textwidth}
\centering
\resizebox{\linewidth}{!}{
\begin{tikzpicture}
    \clip (-1.8,-3.7) rectangle (4.4,1.5);
    \def \linecolor  {black}
    \def \pointwidth  {2pt}
    \def \centerwidth  {2pt}
    \def \linewidth  {1pt}
    \def \step {1}
    
    \coordinate (1) at (3,0);
    \coordinate (11) at ($(1) + (-.2,.6)$);
    \coordinate (12) at ($(1) + (.9,.3)$);

    \coordinate (2) at (-.7,0);
    \coordinate (21) at ($(2) + (-.3,-.6)$);
    \coordinate (22) at ($(2) + (-.5,.3)$);
    
    \coordinate (3) at (2.5,-2.5);
    \coordinate (31) at ($(3) + (-.1,-.3)$);
    \coordinate (32) at ($(3) + (.25, .25)$);
    
    \foreach \p in {2,21,1,11,3,31}
    {
        \draw[fill=orange, draw=orange] (\p) circle (2pt) {};
    }
    \foreach \p in {22,12,32}
    {
        \draw[fill=blue, draw=blue] (\p) circle (2pt) {};
    }

    \draw[draw=green, line width=\linewidth] (2) circle (.75) {};
    \draw[fill=orange, draw=orange] (2) circle (2pt) {};
    \draw[draw=green, line width=\linewidth] (1) circle (1.03) {};

    \node at ($(2) + (0.3,-0.2)$) {$c_2^*$};
    \node at ($(1) + (-.3,-.1)$) {$c_1^*$};
    \node at ($(3) + (-0.28,0.07)$) {$c_3^*$};
    
    \draw[fill=orange, draw=orange] (1) circle (2pt) {};
    \draw[draw=green, line width=\linewidth] (3) circle (.48) {};
    \draw[fill=orange, draw=orange] (1) circle (2pt) {};

    \foreach \p in {22,21,12}
    {
        \draw[draw=black, line width=1pt] (\p) node[rectangle, minimum height=.2cm,minimum width=.2cm,draw] (3.5pt) {};
    }
    \node at ($(22) + (-0.35,0.2)$) {$\bar{c}_2$};
    \node at ($(21) + (-0.25,-0.25)$) {$\bar{c}_3$};
    
    \node at ($(12) + (0.4,0)$) {$\hat{c}_1$};
    \draw[draw=black, line width=\linewidth] (12) circle (3) {};
    
    \draw[->, \linecolor, line width=\linewidth, shorten <=\pointwidth, shorten >=\centerwidth] (2) -- (22);
\end{tikzpicture}
}
\caption{Call \textsc{Farthest-First-Traversal-Completion} with $\hat{c}_1$ already fixed. Guess $a_2=2$.}
\end{subfigure}\hfill
\begin{subfigure}[t]{0.40\textwidth}
\centering
\resizebox{\linewidth}{!}{
\begin{tikzpicture}
    \clip (-1.8,-3.7) rectangle (4.4,1.5);
    \def \linecolor  {black}
    \def \pointwidth  {2pt}
    \def \centerwidth  {2pt}
    \def \linewidth  {1pt}
    \def \step {1}
    
    \coordinate (1) at (3,0);
    \coordinate (11) at ($(1) + (-.2,.6)$);
    \coordinate (12) at ($(1) + (.9,.3)$);

    \coordinate (2) at (-.7,0);
    \coordinate (21) at ($(2) + (-.3,-.6)$);
    \coordinate (22) at ($(2) + (-.5,.3)$);
    
    \coordinate (3) at (2.5,-2.5);
    \coordinate (31) at ($(3) + (-.1,-.3)$);
    \coordinate (32) at ($(3) + (.25, .25)$);
    
    \foreach \p in {2,21,1,11,3,31}
    {
        \draw[fill=orange, draw=orange] (\p) circle (2pt) {};
    }
    \foreach \p in {22,12,32}
    {
        \draw[fill=blue, draw=blue] (\p) circle (2pt) {};
    }

    \draw[draw=green, line width=\linewidth] (2) circle (.75) {};
    \draw[fill=orange, draw=orange] (2) circle (2pt) {};
    \draw[draw=green, line width=\linewidth] (1) circle (1.03) {};

    \node at ($(2) + (0.3,-0.2)$) {$c_2^*$};
    \node at ($(1) + (-.3,-.1)$) {$c_1^*$};
    \node at ($(3) + (-0.28,0.07)$) {$c_3^*$};
    
    \draw[fill=orange, draw=orange] (1) circle (2pt) {};
    \draw[draw=green, line width=\linewidth] (3) circle (.48) {};
    \draw[fill=orange, draw=orange] (1) circle (2pt) {};

    \foreach \p in {22,21,12}
    {
        \draw[draw=black, line width=1pt] (\p) node[rectangle, minimum height=.2cm,minimum width=.2cm,draw] (3.5pt) {};
    }
    \node at ($(21) + (-0.25,-0.25)$) {$\bar{c}_3$};
    
    \node at ($(12) + (0.4,0)$) {$\hat{c}_1$};
    \draw[draw=black, line width=\linewidth] (12) circle (3) {};
    
    \draw[->, \linecolor, line width=\linewidth, shorten <=\pointwidth, shorten >=\centerwidth] (2) -- (22);
    \node at ($(22) + (-.3,.3)$) {$\hat{c}_2$};
    \draw[draw=black, line width=\linewidth] (22) circle (2) {};
\end{tikzpicture}
}
\caption{Set $\hat{c}_2\coloneqq\bar{c}_{a_2}$ and $\hat{r}_2 \coloneqq 3\tilde{r}_2$.}
\end{subfigure}\hfill\\
\begin{subfigure}[t]{0.40\textwidth}
\centering
\resizebox{\linewidth}{!}{
\begin{tikzpicture}
    \clip (-1.8,-3.7) rectangle (4.4,1.5);
    \def \linecolor  {black}
    \def \pointwidth  {2pt}
    \def \centerwidth  {2pt}
    \def \linewidth  {1pt}
    \def \step {1}
    
    \coordinate (1) at (3,0);
    \coordinate (11) at ($(1) + (-.2,.6)$);
    \coordinate (12) at ($(1) + (.9,.3)$);

    \coordinate (2) at (-.7,0);
    \coordinate (21) at ($(2) + (-.3,-.6)$);
    \coordinate (22) at ($(2) + (-.5,.3)$);
    
    \coordinate (3) at (2.5,-2.5);
    \coordinate (31) at ($(3) + (-.1,-.3)$);
    \coordinate (32) at ($(3) + (.25, .25)$);
    
    \foreach \p in {2,21,1,11,3,31}
    {
        \draw[fill=orange, draw=orange] (\p) circle (2pt) {};
    }
    \foreach \p in {22,12,32}
    {
        \draw[fill=blue, draw=blue] (\p) circle (2pt) {};
    }

    \draw[draw=green, line width=\linewidth] (2) circle (.75) {};
    \draw[fill=orange, draw=orange] (2) circle (2pt) {};
    \draw[draw=green, line width=\linewidth] (1) circle (1.03) {};

    \node at ($(2) + (0.3,-0.2)$) {$c_2^*$};
    \node at ($(1) + (-.3,-.1)$) {$c_1^*$};
    \node at ($(3) + (-0.28,0.07)$) {$c_3^*$};
    
    \draw[fill=orange, draw=orange] (1) circle (2pt) {};
    \draw[draw=green, line width=\linewidth] (3) circle (.48) {};
    \draw[fill=orange, draw=orange] (1) circle (2pt) {};

    \foreach \p in {22,12,31}
    {
        \draw[draw=black, line width=1pt] (\p) node[rectangle, minimum height=.2cm,minimum width=.2cm,draw] (3.5pt) {};
    }
    \node at ($(31) + (-0.25,-0.3)$) {$\bar{c}_3$};
    
    \node at ($(12) + (0.4,0)$) {$\hat{c}_1$};
    \draw[draw=black, line width=\linewidth] (12) circle (3) {};
    
    \draw[->, \linecolor, line width=\linewidth, shorten <=\pointwidth, shorten >=\centerwidth] (3) -- (12);
    \node at ($(22) + (-.3,.3)$) {$\hat{c}_2$};
    \draw[draw=black, line width=\linewidth] (22) circle (2) {};
\end{tikzpicture}
}
\caption{Call \textsc{Farthest-First-Traversal-Completion} with $\hat{c}_1, \hat{c}_2$ already fixed. Guess $a_3=1$.}
\end{subfigure}\hfill
\begin{subfigure}[t]{0.40\textwidth}
\centering
\resizebox{\linewidth}{!}{
\begin{tikzpicture}
    \clip (-1.8,-3.7) rectangle (4.4,1.5);
    \def \linecolor  {black}
    \def \pointwidth  {2pt}
    \def \centerwidth  {2pt}
    \def \linewidth  {1pt}
    \def \step {1}
    
    \coordinate (1) at (3,0);
    \coordinate (11) at ($(1) + (-.2,.6)$);
    \coordinate (12) at ($(1) + (.9,.3)$);

    \coordinate (2) at (-.7,0);
    \coordinate (21) at ($(2) + (-.3,-.6)$);
    \coordinate (22) at ($(2) + (-.5,.3)$);
    
    \coordinate (3) at (2.5,-2.5);
    \coordinate (31) at ($(3) + (-.1,-.3)$);
    \coordinate (32) at ($(3) + (.25, .25)$);
    
    \foreach \p in {2,21,1,11,3,31}
    {
        \draw[fill=orange, draw=orange] (\p) circle (2pt) {};
    }
    \foreach \p in {22,12,32}
    {
        \draw[fill=blue, draw=blue] (\p) circle (2pt) {};
    }

    \draw[draw=green, line width=\linewidth] (2) circle (.75) {};
    \draw[fill=orange, draw=orange] (2) circle (2pt) {};
    \draw[draw=green, line width=\linewidth] (1) circle (1.03) {};

    \node at ($(2) + (0.3,-0.2)$) {$c_2^*$};
    \node at ($(1) + (-.3,-.1)$) {$c_1^*$};
    \node at ($(3) + (-0.28,0.07)$) {$c_3^*$};
    
    \draw[fill=orange, draw=orange] (1) circle (2pt) {};
    \draw[draw=green, line width=\linewidth] (3) circle (.48) {};
    \draw[fill=orange, draw=orange] (1) circle (2pt) {};

    \foreach \p in {22,12}
    {
        \draw[draw=black, line width=1pt] (\p) node[rectangle, minimum height=.2cm,minimum width=.2cm,draw] (3.5pt) {};
    }
    \node at ($(12) + (0.4,0)$) {$\hat{c}_1$};
    \draw[draw=black, line width=\linewidth] (12) circle (4.1) {};
    
    \draw[->, \linecolor, line width=\linewidth, shorten <=\pointwidth, shorten >=\centerwidth] (3) -- (12);
    \node at ($(22) + (-.3,.3)$) {$\hat{c}_2$};
    \draw[draw=black, line width=\linewidth] (22) circle (2) {};
\end{tikzpicture}
}
\caption{Set $\hat{c}_3 = \hat{c}_{a_3} = \hat{c}_1$. As we guess an already chosen center, we increase $\hat{r}_1$ by $3\tilde{r}_3$. For completeness, we set $\hat{c}_3$ arbitrarily and $\hat{r}_3=0$.}
\end{subfigure}\hfill
\caption{An example run of \Cref{alg:centers-and-radii}. Given a set of points of which one-third are blue and two-thirds are orange. The green circles indicate the optimal clusters with centers $c_1^*,c_2^*,c_3^*$. Black squares indicate the centers $\{\hat{c}_1,\ldots,\hat{c}_i,\bar{c}_{i+1},\ldots,\bar{c}_k\}$ of the $k$-center completion solution output by \Cref{alg:gonzalez} on given centers $\{\hat{c}_1,\ldots,\hat{c}_i\}$. The arrows represent the guess of the assignment of the next optimal center. Here, we assume that the algorithm guesses correctly.}
\label{fig:example-run}
\end{figure}

\end{document}